\def \lket {|}
\def \rket {\rangle}
\def \A {{\cal A}}
\def \B {{\cal B}}
\def \R {{\cal R}}
\def\bbbr{R}
\newcommand{\ket}[1]{\lket #1\rket}
\newcommand{\comment}[1]{}
\newtheorem{Theorem}{Theorem}
\newtheorem{Lemma}{Lemma}
\newtheorem{Corollary}{Corollary}
\newtheorem{Claim}{Claim}
\newcommand{\proof}{\noindent {\bf Proof: }}
\newcommand{\qed}{\nobreak \ifvmode \relax \else
      \ifdim\lastskip<1.5em \hskip-\lastskip
      \hskip1.5em plus0em minus0.5em \fi \nobreak
      \vrule height0.75em width0.5em depth0.25em\fi}
\begin{document}

\title{Worst case analysis of non-local games
\thanks{Supported by ESF project 2009/0216/1DP/1.1.1.2.0/09/APIA/VIAA/044,
FP7 Marie Curie International Reintegration
Grant PIRG02-GA-2007-224886 and FP7 FET-Open project QCS.}
}
\author{Andris Ambainis,
Art\=urs Ba\v ckurs,
Kaspars Balodis, \\
Agnis \v Sku\v skovniks,
Juris Smotrovs,
Madars Virza\\ \\
Faculty of Computing, University of Latvia, \\
Raina bulv. 19, Riga, LV-1586, Latvia
}
\date{Keywords: quantum computing, non-local games}

\maketitle

\abstract{
Non-local games are studied in quantum information because they provide a simple way for proving 
the difference between the classical world and the quantum world.
A non-local game is a cooperative game played by 2 or more players against a referee. 
The players cannot communicate but may share common random bits or a common quantum state.
A referee sends an input $x_i$ to the $i^{\rm th}$ player who then responds by sending an
answer $a_i$ to the referee. The players win if the answers $a_i$ satisfy a condition that
may depend on the inputs $x_i$.

Typically, non-local games are studied in a framework where the referee picks the inputs from
a known probability distribution. We initiate the study of non-local games in 
a worst-case scenario when the referee's probability distribution is unknown and study 
several non-local games in this scenario.
}

\section{Overview}

Quantum mechanics is strikingly different from classical physics. 
In the area of information processing, this difference can be seen through quantum
algorithms which can be exponentially faster than conventional algorithms \cite{Simon,Shor} and 
through quantum cryptography which offers degree of security that is impossible classically \cite{BB84}.

Another information-theoretic way of seeing the difference between 
quantum mechanics and the classical world is through non-local games. 
An example of non-local game is the CHSH (Clauser-Horne-Shimonyi-Holt) game \cite{CHSH}. 
This is a game
played by two parties against a referee. The two parties cannot communicate but can 
share common randomness or common quantum state that is prepared before the beginning
of the game. The referee prepares two uniformly random bits $x, y$ and gives one
of them to each of two parties. The parties reply by sending bits $a$ and $b$ 
to the referee. They win if $a\oplus b = x\land y$. The maximum winning probability
that can be achieved is 0.75 classically and $\frac{1}{2}+\frac{1}{2\sqrt{2}} = 0.85...$
quantumly. 

Other non-local games can be obtained by changing the winning conditions, replacing bits $x, y$ with values $x, y\in\{1, \ldots, m\}$ or changing the number of parties.
The common feature is that all non-local games involve parties that cannot communicate
but can share common random bits or common quantum states.

There are several reasons why non-local games are interesting. First, CHSH game provides
a very simple example to test validity of quantum mechanics. 
If we have implemented the referee and the two players by devices so that there is no communication possible between $A$ and $B$ and we observe the winning probability of 0.85..., there is no classical explanation possible. 
Second, non-local games have been used in device-independent cryptography.

Non-local games are typically analyzed with the referee acting according to some probability
distribution. For example, in the case of the CHSH game, the referee chooses each of possible
pairs of bits $(0, 0)$, $(0, 1)$, $(1, 0)$, $(1, 1)$ as $(x, y)$ with equal probabilities
1/4. 
In this paper, we initiate study of non-local games in a worst case setting,
when the referee's probability distribution is unknown and the players have to achieve
winning probability at least $p$ for every possible input $(x, y)$.

We analyze a number of games in the worst-case framework. For some of them, the worst-case
winning probability turns out to be the same as the winning probability under the typically studied probability distributions. 
For example, for the CHSH game, the worst-case winning probability is the same 0.75 classically and $\frac{1}{2}+\frac{1}{2\sqrt{2}} = 0.85...$
quantumly.    

\section{Technical preliminaries}

We will study non-local games of the following kind \cite{Cleve} in both classical and quantum settings.
There are $n$ players $A_1$, $A_2$, \dots, $A_n$ which cooperate between themselves
to maximize the game value (see below), and there is a referee.
Before the game the players may share a common source of correlated random
data: in the classical case, a common random variable $R$ taking values in a finite set 
$\R$, and in the quantum case, an entangled $n$-part quantum state 
$\ket\psi\in\A_1\otimes\ldots\otimes\A_n$ (where $\A_i$ is a finite-dimensional subspace
corresponding to the part of the state available to the player $A_i$). 
During the game the players cannot communicate between themselves.

Each of the players ($A_i$) has a finite set of possible input data: $X_i$.
At the start of the game the referee randomly picks values 
$(x_1,\ldots,x_n)={\bf x}\in X_1\times\ldots\times X_n$ according to some probability
distribution $\pi$, and sends each of the players his input (i.~e. $A_i$ receives $x_i$).

Each of the players then must send the referee a response $a_i$ which may
depend on the input and the common random data source. 
(Any additional, local
randomization the player could employ can be technically incorporated in
the random variable $R$, so we will disregard it.)
In this paper we will consider only {\em binary games}, that is games
where the responses are simply bits: $a_i\in\{0,1\}$. We denote
$(a_1,\ldots,a_n)$ by ${\bf a}$.

The referee checks whether the players have won by some predicate
(known to all parties) depending on the players' inputs and outputs: 
$V({\bf a}\mid {\bf x})$. For convenience in formulas, we will
suppose that $V$ takes value $1$ when it is {\em true} and $-1$ when it is {\em false}.
A binary game whose outcome actually depends only on the $XOR$ of the players'
responses: $V({\bf a}\mid {\bf x})=V^\prime(\bigoplus_{i=1}^na_i\mid {\bf x})$, is called an {\em XOR game}.
A game for which the outcome does not change after any permutation
of the players (i.~e. $V(\gamma({\bf a})\mid\gamma({\bf x}))=V({\bf a}\mid {\bf x})$
for any permutation $\gamma$) is called a {\em symmetric game}.

The value $\omega$ of a non-local game $G$ for given strategies of the players
is the difference between the probability that the players win and the probability 
that they lose:
$$
\omega(G)=\Pr[V({\bf a}\mid{\bf x})=1]-\Pr[V({\bf a}\mid{\bf x})=-1]\in[-1,1].
$$
The probability that the players win can then be expressed by the game value
in this way: $\Pr[V({\bf a}\mid{\bf x})=1]=\frac12+\frac12\omega(G)$.

In the classical case, the
players' strategy is the random variable $R$ and a set of functions $a_i:
\;X_i\times\R\to\{0,1\}$ determining the responses. 
The maximal classical game value achievable by the players for a given probability distribution $\pi$ is thus:
$$
\omega^\pi_c(G)=\sup_{R,{\bf a}}\sum_{r,{\bf x}}\pi({\bf x})\Pr[R=r]V(a_1(x_1,r),\ldots,a_n(x_n,r)\mid{\bf x}).
$$
However, actually the use of random variable here is redundant, since in the
expression it provides a convex combination of deterministic strategy game values,
thus the maximum is achieved by some deterministic strategy (with $a_i:\;X_i\to\{0,1\}$):
$$
\omega^\pi_c(G)=\max_{{\bf a}}\sum_{{\bf x}}\pi({\bf x})V(a_1(x_1),\ldots,a_n(x_n)\mid{\bf x}).
$$

In this paper we investigate the case when the players
do not know the input values probability distribution $\pi$ used by the referee,
and must maximize the game value for the worst distribution $\pi$ that the
referee could choose, given the strategy picked by the players. We will call
it the worst-case game value. The maximal classical worst-case game value $\omega_c$ 
achievable by the players is given by the formula
$$
\omega_c(G)=\sup_{R,{\bf a}}\min_\pi\sum_{r,{\bf x}}\pi({\bf x})\Pr[R=r]V(a_1(x_1,r),\ldots,a_n(x_n,r)\mid{\bf x}).
$$
Note that in the worst-case approach the optimal strategy cannot be a
deterministic one, unless there is a deterministic strategy winning on {\em all}
inputs: if there is an input on which the strategy loses, then the referee
can supply it with certainty, and the players always lose.
Clearly, $\omega_c(G)\leq \omega^\pi_c(G)$ for any $\pi$.

In the most of the studied examples $\pi$ has been the uniform distribution.
We will call it the average case and denote its maximum game value by 
$\omega_c^{\rm uni}(G)$.

In the quantum case, the players' strategy is the state $\ket\psi$ and the measurements
that the players pick depending on the received inputs and perform on their parts
of $\ket\psi$ to determine their responses. Mathematically, the measurement performed
by $A_i$ after receiving input $x_i$ is a pair of positive semidefinite 
$\dim\A_i$-dimensional matrices $M_i^{0\mid x_i}$, $M_i^{1\mid x_i}$ with
$M_i^{0\mid x_i}+M_i^{1\mid x_i}=I$ where $I$ is the identity matrix.
We denote the collection of all measurements by ${\bf M}$.

The maximum quantum game value for a fixed distribution $\pi$ is
$$
\omega^\pi_q(G)=\sup_{\ket\psi,{\bf M}}\sum_{{\bf x},{\bf a}}\pi({\bf x})\langle\psi|\bigotimes_{i=1}^nM_i^{a_i\mid x_i}|\psi\rangle V({\bf a}\mid{\bf x}),
$$
and the maximum quantum worst-case game value is
$$
\omega_q(G)=\sup_{\ket\psi,{\bf M}}\min_\pi\sum_{{\bf x},{\bf a}}\pi({\bf x})\langle\psi|\bigotimes_{i=1}^nM_i^{a_i\mid x_i}|\psi\rangle V({\bf a}\mid{\bf x}).
$$

Since the shared entangled state can be used to simulate a random variable,
$\omega_q(G)\geq\omega_c(G)$ and for any $\pi$: $\omega^\pi_q(G)\geq\omega^\pi_c(G)$.

In the case of two player games ($n=2$) we will use notation $A,B$ for the players,
$X,Y$ for the input sets, $x,y$ for the inputs, $a,b$ for the responses, $\A,\B$
for the players' subspaces.


\section{Games with worst case equal to average case}

For several commonly studied non-local games, the worst case and the average case game values are the same:
$\omega_c^{uni}=\omega_c$ and $\omega_q^{uni}=\omega_q$.
The reason for that is the natural symmetries of the non-local games.
These symmetries often result in natural strategies which achieve the 
game value $\omega_c^{uni}$ or $\omega_q^{uni}$ on all inputs simultaneously, 
thus proving that $\omega_c^{uni}=\omega_c$ or $\omega_q^{uni}=\omega_q$.

We show that this happens for two well-known non-local games: CHSH game (which is a canonical example
of a 2-player non-local game with a quantum advantage) and Mermin-Ardehali game (which is the n-player XOR game
with the biggest advantage for quantum strategies). Other examples with $\omega_c^{uni}=\omega_c$ and $\omega_q^{uni}=\omega_q$
(for similar reasons) are the Odd Cycle game of \cite{Cleve} and the Magic Square game of \cite{Arvind,Cleve}.

\subsection{CHSH game}

The CHSH game is a two player XOR game with $X=Y=\{0,1\}$, 
$V(a,b\mid x,y)=a\oplus b\equiv x\land y$, and $\pi$ the uniform distribution.
It is easy to check that no deterministic strategy can win on all inputs, 
but the strategy $a(x)=0$, $b(y)=0$ wins on 3 inputs out of 4, so \cite{Cleve}:
$\omega_c^{\rm uni}(CHSH)=0.75-0.25=0.5$.

Moreover, since out of the four strategies $S_1$: $a(x)=0$, $b(y)=0$; 
$S_2$: $a(x)=x$, $b(y)=0$; 
$S_3$: $a(x)=0$, $b(y)=y$; $S_4$: $a(x)=x$, $b(y)=\neg y$ each one loses on a
different input, and wins on the 3 other ones, we have for any predetermined $\pi$:
$\omega^\pi_c(CHSH)=1-2\min_{x,y}\pi(x,y)\geq 0.5$. Indeed, one can pick the strategy
losing on the input with the minimal value of $\pi$.

\begin{Theorem}
$\omega_c(CHSH)=0.5$.
\end{Theorem}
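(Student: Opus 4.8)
The plan is to establish the two inequalities $\omega_c(CHSH)\le 0.5$ and $\omega_c(CHSH)\ge 0.5$ separately. For the upper bound I would simply invoke the general fact, noted just above, that $\omega_c(G)\le\omega^\pi_c(G)$ for every distribution $\pi$, and specialize to $\pi$ the uniform distribution, for which $\omega^{\rm uni}_c(CHSH)=0.5$ has already been verified. This immediately gives $\omega_c(CHSH)\le 0.5$, so all the work is in the lower bound.

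For the lower bound I would exhibit an explicit randomized strategy witnessing $\min_\pi(\cdots)\ge\tfrac12$ in the formula defining $\omega_c$. Take $R$ uniform on $\{1,2,3,4\}$ and let the players play $S_R$, where $S_1,S_2,S_3,S_4$ are the four deterministic strategies listed above. The key observation, already recorded in the preceding paragraph, is that each $S_j$ loses on exactly one input and these four losing inputs are the four distinct elements of $X\times Y=\{0,1\}\times\{0,1\}$ (namely $(1,1),(1,0),(0,1),(0,0)$ respectively). Hence for every fixed input $(x,y)$ exactly one of the four strategies loses and the other three win, so the conditional win probability is $\tfrac34$ and the loss probability is $\tfrac14$, giving game value exactly $\tfrac12$ on that input. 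Plugging into the definition of $\omega_c$, the inner sum becomes $\sum_{x,y}\pi(x,y)\cdot\tfrac12=\tfrac12$, a quantity independent of $\pi$; therefore $\min_\pi$ of it equals $\tfrac12$, and taking the supremum over strategies yields $\omega_c(CHSH)\ge 0.5$. Combining the two bounds gives $\omega_c(CHSH)=0.5$.

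The only point needing care is the claim that the losing inputs of $S_1,\dots,S_4$ are pairwise distinct and exhaust $X\times Y$; but this is a finite $4\times 4$ case check against the predicate $a\oplus b\equiv x\land y$, and it was already asserted in the text, so there is no genuine obstacle. If anything, the subtle part is purely formal: being careful that $\min_\pi$ applied to a function constant in $\pi$ returns that constant, which is what lets the single mixed strategy beat \emph{every} referee distribution at once.
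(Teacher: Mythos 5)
Your proposal is correct and follows essentially the same route as the paper: the upper bound via $\omega_c(CHSH)\leq\omega_c^{\rm uni}(CHSH)=0.5$, and the lower bound via the uniform mixture of $S_1,\dots,S_4$, whose pairwise distinct losing inputs guarantee a win probability of exactly $3/4$ on every input regardless of $\pi$. You merely spell out the bookkeeping (the identification of the four losing inputs and the constancy of the inner sum in $\pi$) in more detail than the paper does.
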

\proof
If the players use the random variable $R$ to pick one of the atrategies
$S_1$, $S_2$, $S_3$, $S_4$ mentioned above with equal probability (i.~e. 0.25), 
then for any input $x,y$ they will have a winning strategy with probability 0.75.
Thus $\omega_c(CHSH)\geq 0.5$. On the other hand, 
$\omega_c(CHSH)\leq \omega^{\rm uni}_c(CHSH)=0.5$.
\qed

\cite{Cleve} shows that the winning probability in the quantum case
is $\frac12+\frac1{2\sqrt2}$ giving $\omega_q^{\rm uni}(CHSH)=1/\sqrt{2}$.
Moreover, the used strategy achieves this value {\em on every input}
$x,y$, therefore it gives also the worst-case value:

\begin{Theorem}
$\omega_q(CHSH)=1/\sqrt{2}$.
\end{Theorem}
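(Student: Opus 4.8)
The plan is to establish the equality by proving two inequalities. The upper bound $\omega_q(CHSH)\le 1/\sqrt2$ is immediate from the facts already recorded in the excerpt: for any strategy, $\min_\pi(\dots)\le$ the value against the uniform distribution, so $\omega_q(CHSH)\le\omega_q^{\rm uni}(CHSH)=1/\sqrt2$, the latter being the Tsirelson bound cited from \cite{Cleve}. So the real content is the lower bound: I must exhibit a single quantum strategy (state plus measurements) whose bias equals $1/\sqrt2$ \emph{simultaneously on each of the four inputs} $(x,y)\in\{0,1\}^2$, rather than merely in expectation over the uniform distribution.

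First I would recall the standard optimal CHSH strategy: the players share the EPR pair $\ket\psi=\frac1{\sqrt2}(\ket{00}+\ket{11})$, player $A$ measures in the basis rotated by angle $0$ if $x=0$ and by $\pi/4$ if $x=1$, and player $B$ measures in the basis rotated by $\pi/8$ if $y=0$ and by $-\pi/8$ if $y=1$; the outputs $a,b$ are the $\pm1$ measurement outcomes (translated to bits). The next step is the key computation: for fixed $x,y$, the correlation $\langle\psi|(\sigma_{\theta_x}\otimes\sigma_{\phi_y})|\psi\rangle$ equals $\cos(\theta_x-\phi_y)$ for the EPR pair, and with the chosen angles each of the four differences $\theta_x-\phi_y$ is $\pm\pi/8$ or $\pm3\pi/8$ arranged so that the contribution to $V(a,b\mid x,y)=(-1)^{x\land y}(a\oplus b)$ is exactly $\cos(\pi/8)=\frac1{\sqrt2}\cdot\frac{\sqrt{2+\sqrt2}}{\dots}$ — more precisely, the quantity $\Pr[a\oplus b=x\land y]-\Pr[a\oplus b\ne x\land y]$ comes out to $\tfrac1{\sqrt2}$ on \emph{every} input, not just on average. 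This is the crux, and it is exactly the observation quoted in the excerpt that ``the used strategy achieves this value on every input''.

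Given that per-input computation, the lower bound follows cleanly: since the bias against the predicate is $1/\sqrt2$ for each $(x,y)$ separately, for any probability distribution $\pi$ we get $\sum_{x,y}\pi(x,y)\cdot\tfrac1{\sqrt2}=\tfrac1{\sqrt2}$, so $\min_\pi$ of this expression is $1/\sqrt2$, whence $\omega_q(CHSH)\ge 1/\sqrt2$. Combining with the upper bound gives equality. I would close by remarking that this matches the pattern of Theorem~1: a sufficiently symmetric optimal strategy is ``flat'' across inputs, so the worst-case and average-case values coincide.

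The main obstacle is the per-input bias calculation — one must verify that the textbook CHSH angles really do equalize the bias over all four inputs (it is easy to write down a strategy optimal on average but uneven across inputs). This is a short trigonometric check once the angles are pinned down, but it is the one place where the argument could go wrong, so I would carry it out explicitly rather than merely cite $\omega_q^{\rm uni}=1/\sqrt2$, since the average-case optimality alone does not formally imply per-input flatness.
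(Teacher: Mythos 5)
Your proposal is correct and takes essentially the same route as the paper, whose entire argument is the two observations you make: $\omega_q(CHSH)\le\omega_q^{\rm uni}(CHSH)=1/\sqrt2$ for the upper bound, and the fact (cited from Cleve et al.) that the standard optimal strategy achieves bias exactly $1/\sqrt2$ on \emph{every} input, which kills the $\min_\pi$ for the lower bound. One caution on the detail you rightly flag as the crux: with the correlation formula $\cos(\theta_x-\phi_y)$ that you state, the matching angles are $\theta_0=0,\ \theta_1=\pi/2,\ \phi_0=\pi/4,\ \phi_1=-\pi/4$ (giving $\pm 1/\sqrt2$ with the correct signs on all four inputs); the half-angles $0,\pi/4,\pm\pi/8$ you wrote belong to the convention where the correlation is $\cos\bigl(2(\theta_x-\phi_y)\bigr)$, and taken literally with $\cos(\theta_x-\phi_y)$ they give $\cos(\pi/8)$ on three inputs and $\cos(3\pi/8)$ on the fourth, which is not flat.
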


\subsection{Mermin-Ardehali game}


Mermin-Ardehali (MA) game \cite{Mermin,Ardehali} 
is an $n$-player XOR game with $X_1=\ldots=X_n=\{0,1\}$ and the winning condition:
$a_1\oplus \ldots\oplus a_n = 0$ if 
$(x_1+\ldots+x_n) \bmod 4 \in \{0, 1\}$ and
$a_1\oplus \ldots\oplus a_n = 1$ if 
$(x_1+\ldots+x_n) \bmod 4 \in \{2, 3\}$.

For the uniform distribution on the inputs, 
this game can be won with probability $\frac{1}{2}+\frac{1}{2\sqrt{2}}$ quantumly
and $\frac{1}{2}+\frac{1}{2^{\lfloor \frac{n}{2}+1 \rfloor}}$ classically \cite{Mermin,Ardehali,AKNR}.
Thus, the ratio between its quantum and classical values, 
$\frac{\omega^{\rm uni}_{q}(MA)}{\omega^{\rm uni}_{c}(MA)}$, 
is equal to $2^{\lfloor \frac{n}{2} \rfloor -\frac{1}{2} }$.

This game corresponds to Mermin-Ardehali $n$-partite Bell inequality \cite{Mermin,Ardehali}.
For even $n$, this inequality is the Bell inequality that can be violated 
by the biggest possible margin in the quantum theory \cite{WW01}.

{\sloppy
Namely, Werner and Wolf \cite{WW01} have shown the following theorem (translated
here from the language of Bell inequalities to the language of XOR games):

}
\begin{Theorem}
\label{thm:ww}
\cite{WW01}
No n-party XOR game $G$ with binary inputs $x_i$ (with any input distribution $\pi$)
achieves $\frac{\omega^\pi_{q}(G)}{\omega^\pi_{c}(G)}>2^{\frac{n-1}{2}}$.
\end{Theorem}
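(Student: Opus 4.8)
The plan is to reduce everything to a single fixed distribution via a standard Fourier/Walsh analysis of XOR games, and then compare the classical and quantum optima directly. First I would fix an arbitrary input distribution $\pi$ on $\{0,1\}^n$ and set up notation: since the game is an XOR game, $V'(c\mid {\bf x})$ depends only on $c=\bigoplus_i a_i$, so I write the sign coefficient $g({\bf x}) = V'(0\mid{\bf x}) = \pm 1$ and absorb the probabilities into the real vector $\widehat g({\bf x}) = \pi({\bf x}) g({\bf x})$. For a classical deterministic strategy $a_i:\{0,1\}\to\{0,1\}$, encode $(-1)^{a_i(x_i)} = s_i \in \{-1,+1\}$ depending on the bit $x_i$ (so each player really picks a $\pm1$-valued function of one bit, i.e.\ a pair $(s_i^0,s_i^1)$). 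The classical value becomes $\omega_c^\pi(G) = \max \sum_{\bf x} \widehat g({\bf x}) \prod_i s_i^{x_i}$, a multilinear optimization over the cube. For the quantum case, by the standard reduction for XOR games (Tsirelson), each player's $\pm1$ observable for input $x_i$ is a unit vector $u_i^{x_i}$ in a real Hilbert space, and $\omega_q^\pi(G) = \max \sum_{\bf x} \widehat g({\bf x}) \langle u_1^{x_1}, \dots\rangle$ where the correlation is the appropriate inner-product expression arising from the shared state; I would quote this as the Tsirelson-type characterization rather than re-derive it.

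Next I would expand both optima in the Fourier basis over the $n$ input bits. Writing $\widehat g({\bf x}) = \sum_{S\subseteq[n]} c_S \chi_S({\bf x})$ is not quite the right move; instead the cleaner route (the one Werner–Wolf use) is to diagonalize in the ``input'' variables: substitute $x_i \mapsto (1\pm t_i)/2$-type parametrization so that $\prod_i s_i^{x_i}$ becomes an affine-multilinear function and the whole sum factors as a value of a multilinear form $E$ evaluated at the tuple of one-bit strategies. The key structural fact is that the quantum value is obtained from exactly the same multilinear form $E$ but now evaluated at \emph{vectors} $u_i^{x_i}$ with $\|u_i^{x_i}\|\le 1$ instead of scalars in $[-1,1]$ — this is the Werner–Wolf observation that an XOR game with binary inputs is controlled by a single $2^n$-coefficient object. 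Then I would invoke the Grothendieck-type / Bell-inequality estimate: for a multilinear form built this way, the ratio between its maximum over vector assignments and its maximum over $\pm1$ assignments is at most $2^{(n-1)/2}$. Concretely, one decomposes an optimal quantum assignment over the product of spheres, uses the fact that a complex number $e^{i\phi}$ (equivalently a 2D unit vector) is an average of two $\pm1$ values scaled by $\sqrt2$, and iterates over the $n$ players, losing a factor $\sqrt2$ per player but recovering one factor because the overall game value can be normalized — giving the $2^{(n-1)/2}$ rather than $2^{n/2}$.

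The main obstacle, and the step I would spend the most care on, is this last comparison: showing the gain-per-player is exactly $\sqrt2$ and that the ``$-1$'' in the exponent $\frac{n-1}{2}$ genuinely appears. The clean way is Werner and Wolf's own argument: express the quantum correlation tensor as a convex combination (after a rotation) whose extreme points are classical correlation tensors scaled by a factor that, when the product over all $n$ slots is taken and one uses $|\omega_q^\pi(G)|\le 1$ together with $\omega_c^\pi(G)$ being the max over those extreme points, collapses to $2^{(n-1)/2}$; equivalently one can run an explicit induction on $n$, peeling off one player at a time and writing their unit-vector observable as $\frac{1}{\sqrt2}((+1)\text{-branch} + (-1)\text{-branch})$ in a suitably chosen real 2-plane, so that after $n$ steps the quantum form is a signed average of $2^n$ classical forms each multiplied by $2^{-n/2}$, whence $\omega_q^\pi(G) \le 2^{n/2}\,\omega_c^\pi(G)$; the sharpening to $2^{(n-1)/2}$ then comes from the parity constraint that $n$ even vs.\ odd forces, or from normalizing out the trivial global sign. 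Since the theorem is attributed to \cite{WW01}, in the paper I would state the reduction to a single $\pi$, note that it suffices to prove it for each fixed $\pi$, and then cite \cite{WW01} for the multilinear-form inequality itself rather than reproducing the full Bell-inequality computation.
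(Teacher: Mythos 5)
First, a fact about the source: the paper does not prove this theorem at all. It is stated as a translation of a result of Werner and Wolf from the language of Bell inequalities into the language of XOR games, and the entire ``proof'' is the citation to \cite{WW01}. Your closing sentence --- reduce to a fixed $\pi$, state the multilinear-form formulation, and cite \cite{WW01} for the inequality itself --- is therefore exactly what the paper does, and in that respect your proposal is consistent with the source.

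The sketched internals, however, contain a genuine gap that you should not let stand if you ever expand the sketch into an actual proof. The step ``by the standard reduction for XOR games (Tsirelson), each player's $\pm 1$ observable for input $x_i$ is a unit vector $u_i^{x_i}$ and $\omega^\pi_q(G)$ is the maximum of the same multilinear form over tuples of unit vectors'' is only valid for $n=2$. Tsirelson's theorem gives the inner-product characterization of \emph{bipartite} XOR correlations; for $n\geq 3$ parties there is no analogue stating that the quantum value equals the maximum of the game's multilinear form over vector assignments, and the set of $n$-partite quantum full-correlation functions is not described by such a relaxation. Werner and Wolf's actual argument instead characterizes the extreme points of the $n$-partite full-correlation body (after diagonalization, products of single-site phases $e^{i\phi_i^{x_i}}$, with the correlation obtained as a real part) and compares the resulting maximum with the classical one; that is a different object from the Grothendieck-type vector relaxation you invoke. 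Relatedly, the passage from ``lose a factor $\sqrt2$ per player,'' which would give $2^{n/2}$, to the exponent $\frac{n-1}{2}$ is asserted in three alternative ways (normalization, a parity constraint, a global sign) without any of them being an argument; this saving of one $\sqrt2$ is precisely the sharp content of the theorem --- it is what makes the Mermin--Ardehali game extremal for even $n$ --- so it is the one step that cannot be left to hand-waving. Since the paper itself only cites \cite{WW01}, the safe course is to do the same and drop the sketch, or else to reproduce Werner and Wolf's phase-product characterization correctly rather than a two-player Tsirelson reduction that does not generalize.
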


This makes the worst-case analysis of Mermin-Ardehali game for even $n$ quite straightforward.
For the quantum case, the maximal game value $\frac{1}{\sqrt{2}}$ is
given by a quantum strategy which achieves it {\em on every input}
\cite{Mermin,Ardehali,AKNR}, thus
\begin{Theorem}
For all $n$: $\omega_q(MA)=1/\sqrt{2}$.
\end{Theorem}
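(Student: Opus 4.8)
The plan is a two-sided estimate: show $\omega_q(MA)\ge 1/\sqrt2$ by exhibiting a single quantum strategy that is good on \emph{every} input, and show $\omega_q(MA)\le 1/\sqrt2$ by comparing with the uniform distribution.

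For the lower bound I would invoke the optimal quantum strategy for Mermin--Ardehali from \cite{Mermin,Ardehali,AKNR}. The $n$ players share the GHZ state $\frac1{\sqrt2}(\ket{0\dots0}+\ket{1\dots1})$, and on input $x_i$ player $i$ performs a $\pm1$-valued measurement along a direction in the equatorial plane at an angle $\theta_i(x_i)$, the angles being chosen so that $\sum_i\theta_i(x_i)$ depends on the input only through $(\sum_i x_i)\bmod 4$. For the GHZ state the XOR-correlation is $\langle\psi|\bigotimes_i A_i(x_i)|\psi\rangle=\cos\bigl(\sum_i\theta_i(x_i)\bigr)$, and the angles are arranged (each player rotating by $\pi/2$ between its two inputs, with an overall $\pm\pi/4$ offset) so that this cosine equals $+1/\sqrt2$ exactly when $(\sum_i x_i)\bmod 4\in\{0,1\}$ and $-1/\sqrt2$ when $(\sum_i x_i)\bmod 4\in\{2,3\}$ --- that is, it always has the sign prescribed by the winning parity and magnitude $1/\sqrt2$. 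Hence for \emph{every} input ${\bf x}$ this strategy wins with probability exactly $\frac12+\frac1{2\sqrt2}$, so its game value is $1/\sqrt2$ on every input and therefore also against the referee's worst choice of $\pi$; thus $\omega_q(MA)\ge 1/\sqrt2$. (For $n=2$ this is precisely the CHSH quantum strategy.)

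For the upper bound, the worst-case value is at most the value under any fixed input distribution, in particular the uniform one, so $\omega_q(MA)\le\omega_q^{\rm uni}(MA)$; this is immediate from the definition of $\omega_q$ as a minimum over $\pi$, exactly as in the classical inequality $\omega_c\le\omega_c^\pi$ recorded above. Since $\omega_q^{\rm uni}(MA)=1/\sqrt2$ \cite{Mermin,Ardehali,AKNR}, we get $\omega_q(MA)\le 1/\sqrt2$, and combining the two bounds gives the claimed equality. (For even $n$ the upper bound can also be read off directly from Theorem~\ref{thm:ww}: $\omega_q^{\rm uni}(MA)\le 2^{(n-1)/2}\,\omega_c^{\rm uni}(MA)=2^{(n-1)/2}\cdot 2^{-n/2}=1/\sqrt2$; for odd $n$ that bound is too weak, which is why one falls back on the known exact quantum value.)

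The whole argument is a short sandwich, and the only genuine content is that a \emph{single} quantum strategy is simultaneously optimal on all $2^n$ inputs. That point is exactly the computation underlying the Mermin--Ardehali inequality and is available off the shelf from the cited papers, so I expect the "hard part" to be nothing more than verifying that the Mermin--Ardehali angles really do produce the sign pattern $+,+,-,-$ across $(\sum_i x_i)\bmod 4$ uniformly in ${\bf x}$; once that is granted, the worst-case statement needs no new idea.
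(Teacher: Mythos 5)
Your proof is correct and follows the same route as the paper: the paper's entire argument is the observation that the known Mermin--Ardehali quantum strategy attains the value $1/\sqrt2$ \emph{on every input} (giving the lower bound for the worst case), combined with the trivial upper bound $\omega_q(MA)\le\omega_q^{\rm uni}(MA)=1/\sqrt2$. You simply spell out the GHZ/angle construction that the paper delegates to \cite{Mermin,Ardehali,AKNR}; there is no substantive difference.
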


For the classical case, the worst-case game value cannot be better than
the game value $2^{-\frac{n}{2}}$ for the uniform distribution. 
If the worst-case value was 
$\omega_c(MA)<2^{-\frac{n}{2}}$ then, by Yao's principle \cite{Yao}, there would be 
a specific probability distribution $\pi$ for 
which no classical strategy achieves value exceeding $\omega_c(MA)$. 
Then, we would obtain a ratio 
$\frac{\omega^\pi_{q}(MA)}{\omega^\pi_{c}(MA)}>2^{\frac{n-1}{2}}$ 
(because the quantum game value would still be at least 
$1/\sqrt{2}$),
contrary to Theorem \ref{thm:ww}. Hence
\begin{Theorem}
For even $n$: $\omega_c(MA)=2^{-\frac{n}{2}}$.
\end{Theorem}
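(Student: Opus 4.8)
The plan is to establish the two inequalities $\omega_c(MA)\le 2^{-n/2}$ and $\omega_c(MA)\ge 2^{-n/2}$ separately, the first being essentially immediate and the second requiring the Werner–Wolf bound together with a minimax (Yao-principle) argument.

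For the upper bound, I would simply note that the worst-case value is bounded above by the value for any fixed distribution, in particular the uniform one: $\omega_c(MA)\le\omega_c^{\rm uni}(MA)=2^{-n/2}$, the latter equality being the known classical value of the Mermin–Ardehali game cited from \cite{Mermin,Ardehali,AKNR}. This step is routine.

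The substantive direction is $\omega_c(MA)\ge 2^{-n/2}$, which I would prove by contradiction. Suppose $\omega_c(MA)<2^{-n/2}$. The quantity $\omega_c(MA)=\sup_{R,{\bf a}}\min_\pi\sum_{r,{\bf x}}\pi({\bf x})\Pr[R=r]V(\cdots)$ is a minimax over the players' (randomized) strategies and the referee's distributions. The set of randomized strategies is a simplex (convex combinations of the finitely many deterministic strategies) and the set of distributions $\pi$ is a simplex as well; the payoff is bilinear in these two arguments, so von Neumann's minimax theorem (equivalently Yao's principle \cite{Yao}) lets me exchange $\sup$ and $\min$. Hence there exists a single distribution $\pi^*$ such that $\max_{R,{\bf a}}\sum\pi^*({\bf x})\Pr[R=r]V(\cdots)=\omega_c(MA)$; since the inner maximum over randomized strategies is attained at a deterministic one, this says $\omega_c^{\pi^*}(MA)=\omega_c(MA)<2^{-n/2}$. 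On the other hand, the quantum strategy for Mermin–Ardehali achieves game value $1/\sqrt2$ on \emph{every} input, hence under \emph{any} distribution, so $\omega_q^{\pi^*}(MA)\ge 1/\sqrt2$. Then $\frac{\omega_q^{\pi^*}(MA)}{\omega_c^{\pi^*}(MA)}>\frac{1/\sqrt2}{2^{-n/2}}=2^{(n-1)/2}$, contradicting Theorem~\ref{thm:ww}. Therefore $\omega_c(MA)\ge 2^{-n/2}$, and combined with the upper bound this gives $\omega_c(MA)=2^{-n/2}$ for even $n$.

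The only delicate point — and the one I would make sure to state cleanly rather than wave at — is the invocation of the minimax theorem to pass from "no strategy beats $\omega_c(MA)$ in the worst case" to "there is a fixed $\pi^*$ that no strategy beats." One must check the hypotheses: the strategy space can be taken as the (compact, convex) probability simplex over the finite set of deterministic strategies, the distribution space is the compact convex simplex over the finitely many inputs, and the payoff is continuous (indeed bilinear) in both; these conditions are exactly what Yao's principle packages, so no real obstacle remains, but it is worth noting that the restriction to even $n$ enters only through the numerical value $2^{-n/2}$ (for odd $n$ the classical uniform value is $2^{-\lfloor n/2+1\rfloor+1/2}\cdot\frac12$-type expression and the ratio argument no longer pins things down exactly). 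I do not expect any computational difficulty beyond this.
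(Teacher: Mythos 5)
Your proposal is correct and follows essentially the same route as the paper: bound $\omega_c(MA)$ above by the uniform-distribution value, then argue by contradiction via Yao's principle that a smaller worst-case value would yield a distribution $\pi^*$ whose quantum-to-classical ratio exceeds $2^{(n-1)/2}$, violating the Werner--Wolf bound (Theorem~\ref{thm:ww}). Your more careful justification of the minimax exchange is a welcome addition but does not change the argument.
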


By similar reasoning, $2^{-\frac{n}{2}}$ is a lower bound for the classical 
worst-case game value for odd $n$. However, the best upper bound that we can prove in this case is 
$2^{-\frac{n-1}{2}}$ (because this is the value for the uniform
distribution).


\section{Games with worst case different from average case}

In the previous section, we saw that, for many well-known
non-local games, the worst case probability distribution is the uniform distribution. 
We now present several games for which this is not the case.

\subsection{EQUAL-EQUAL game}


We define EQUAL-EQUAL ($EE_m$) as 
a two-player XOR game with $X=Y=\{1,\ldots,m\}$ and 
$V(a,b\mid x,y)\stackrel{def}{=}(x=y)\equiv(a=b)$.

This is a natural variation of the Odd-Cycle game of \cite{Cleve}.
For $m=3$, the Odd-Cycle game can be viewed as a game in which 
the players attempt to prove to the referee that they have 3 bits 
$a_1, a_2, a_3\in\{0, 1\}$ which all have different values. 

This can be generalized to larger $m$ in two different ways.
The first generalization is the Odd-Cycle game \cite{Cleve} in which the players attempt to 
prove to the referee that an $m$-cycle (for $m$ odd) is 2-colorable.
The second generalization is a game in which the players attempt to prove that
they have $m$ bits $a_1, \ldots, a_m\in\{0, 1\}$ which all have different values.
This is our EQUAL-EQUAL game.

\begin{Theorem}
\label{cEEm}
For even $m$: $\omega_c(EE_m)=\frac{m}{3m-4}$, and
for odd $m$: $\omega_c(EE_m)=\frac{m+1}{3m-1}$.
\end{Theorem}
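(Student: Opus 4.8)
The plan is to prove matching upper and lower bounds on $\omega_c(EE_m)$. Recall that $\omega_c(EE_m) = \sup_{R,\mathbf{a}} \min_\pi \sum_{r,\mathbf{x}} \pi(\mathbf{x})\Pr[R=r] V(\ldots)$, and since the inner minimum over distributions $\pi$ is attained at a point mass, this equals $\sup_{R,\mathbf{a}} \min_{x,y} \big(\Pr[\text{players win on }(x,y)] - \Pr[\text{players lose on }(x,y)]\big)$. So the task is to design a randomized strategy (a distribution over deterministic strategies $a\colon X\to\{0,1\}$, $b\colon Y\to\{0,1\}$) that wins with the same probability $p$ on every input pair, and separately to show no randomized strategy can do better than the claimed $p$.

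\textbf{Lower bound (strategy construction).} A deterministic strategy is a pair of functions $a,b$, equivalently a pair of subsets $S_a = a^{-1}(1)$, $S_b = b^{-1}(1)$ of $\{1,\dots,m\}$. Such a strategy wins on $(x,y)$ with $x=y$ iff $a(x)=b(x)$, and wins on $(x,y)$ with $x\ne y$ iff $a(x)\ne b(y)$. I expect the optimal randomized strategy to be the uniform mixture over a symmetric family of deterministic strategies — e.g. for each subset of a fixed size $k$, set $a=b=\mathbf{1}_S$; or mixtures of "threshold/balanced" colorings — chosen so that by symmetry the win probability is the same for all diagonal inputs and the same for all off-diagonal inputs, and then optimize $k$ (and the mixing weights between the two types of input) so these two values coincide. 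The even/odd split in the answer strongly suggests the extremal construction uses balanced colorings $|S_a|\approx m/2$, where parity of $m$ changes whether a perfectly balanced coloring exists; carrying out the counting of agreements/disagreements and equating the diagonal and off-diagonal win rates should produce $\frac{m}{3m-4}$ (even) and $\frac{m+1}{3m-1}$ (odd).

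\textbf{Upper bound.} Here I would use an LP-duality / averaging argument: exhibit a single distribution $\pi^\*$ on $X\times Y$ against which every deterministic strategy has value at most the claimed $p$; then $\omega_c(EE_m)\le \omega_c^{\pi^\*}(EE_m)\le p$ (the value of any randomized strategy against $\pi^\*$ is a convex combination of deterministic values). By the symmetry of $EE_m$ under relabeling of $\{1,\dots,m\}$ it is natural to take $\pi^\*$ invariant under the symmetric group, i.e. determined by two parameters: total mass $q$ on the diagonal $\{x=y\}$ (spread uniformly over its $m$ cells) and mass $1-q$ off-diagonal (uniform over its $m(m-1)$ cells). Then for a deterministic strategy described by sets $S_a,S_b$ with $|S_a|=i$, $|S_b|=j$, $|S_a\cap S_b|=t$, the value against $\pi^\*$ is an explicit linear expression in $i,j,t$; one maximizes over the (integer) ranges of $i,j,t$ and then picks $q$ to make the max as small as possible. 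Matching this minimax value with the lower-bound construction finishes the proof.

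\textbf{Main obstacle.} The routine parts are the agreement-counting and the single-variable optimizations. The genuinely delicate step is the discrete optimization in the upper bound: maximizing the bilinear-in-$(i,j,t)$ value over integers and checking that the worst deterministic response to $\pi^\*$ really is the balanced coloring (so that the bound is tight), \emph{and} handling the parity of $m$ correctly — for odd $m$ no coloring is perfectly balanced, which is exactly why the formula shifts to $\frac{m+1}{3m-1}$. Getting the integrality/parity bookkeeping right in both the construction and the dual bound, and confirming they meet, is where the real work lies.
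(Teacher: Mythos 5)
Your overall architecture --- a randomized strategy that equalizes the game value across all inputs for the lower bound, plus a single hard symmetric distribution $\pi^*$ (diagonal mass $q$, off-diagonal mass $1-q$) against which every deterministic strategy is bounded, for the upper bound --- is exactly the paper's. The upper-bound plan is sound: the paper takes precisely such a $\pi_{\alpha,\beta}$, reduces the maximization over deterministic strategies to a one-parameter concave quadratic (in $k$, the number of inputs with $a(i)\neq b(i)$, together with the balance inside the agreeing class), and checks the endpoints; your $(i,j,t)$ parametrization is a workable equivalent.

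The concrete gap is in your lower-bound construction. The family you name --- $a=b=\mathbf{1}_S$ over subsets $S$ of fixed size $k$ --- consists entirely of strategies that win \emph{with certainty} on every diagonal input, so no choice of $k$ or of weights within this family can trade diagonal performance for off-diagonal performance. Quantitatively, the uniform mixture over balanced colorings wins an off-diagonal input $(x,y)$ only with probability $\tfrac{2k(m-k)}{m(m-1)}\le\tfrac{m}{2(m-1)}$, i.e.\ worst-case game value $\tfrac{1}{m-1}$, which tends to $0$ rather than to the claimed $\approx 1/3$. The missing ingredient is a second, complementary strategy that wins exactly off the diagonal: the paper mixes in the constant strategy $a\equiv 0$, $b\equiv 1$ with probability $p$ and the uniform balanced-coloring mixture with probability $1-p$, then solves for $p$ ($p=\tfrac{m-2}{3m-4}$ for even $m$, $p=\tfrac{m-1}{3m-1}$ for odd $m$) so that the diagonal and off-diagonal values coincide at the stated bound. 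Your phrase about ``mixing weights between the two types'' gestures in this direction, but without identifying an off-diagonal-winning component the construction as written cannot reach the claimed value, so the lower bound does not go through.
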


\proof
For the lower bound, we provide a strategy achieving the needed game value
on all inputs. It uses the random variable $R$ to obtain the following
probabilistic mix of deterministic strategies: with some probability
$p$ the strategy having $a=0$ and $b=1$ on all inputs is picked,
and with probability $1-p$ a strategy of the following kind is chosen
uniformly randomly: for some fixed $\lfloor m/2\rfloor$ input values $i$: 
$a(i)=b(i)=0$, and for the remaining $\lceil m/2\rceil$ values $i$: $a(i)=b(i)=1$.

The first strategy wins with certainty if $x\neq y$, and loses if $x=y$.
The strategy mix of the other case wins with certainty if $x=y$.
In the case $x\neq y$ a particular strategy of the mix loses
iff $a(x)=b(y)$. There are $\lfloor(m-1)^2/2\rfloor$ such
input pairs among the $m(m-1)$ pairs with $x\neq y$.
Picking the strategy randomly, for any such input pair we get
that the probability of losing is $\frac{\lfloor(m-1)^2/2\rfloor}{m(m-1)}$.

Easy calculations show that by picking $p=\frac{m-2}{3m-4}$ for even $m$
and $p=\frac{m-1}{3m-1}$ for odd $m$, we obtain the desired game value
for both cases $x=y$ and $x\neq y$.

For the upper bound, we provide a probability distribution $\pi$
for which no deterministic strategy can exceed the game value
of the theorem's statement. Let us denote by $\pi_{\alpha,\beta}$
a distribution having $\pi_{\alpha,\beta}(i,i)=\alpha$ for any $i$ and
$\pi_{\alpha,\beta}(i,j)=\beta$ for any distinct $i$, $j$. We will use $\pi_{\alpha,\beta}$
with 
\begin{equation}
\label{cEEm:eq1}
\alpha=
\begin{cases}
\frac{m-1}{m(3m-1)} & \text{if $m$ is odd,}\\
\frac{m-2}{m(3m-4)} & \text{if $m$ is even,}
\end{cases}
\quad\beta=
\begin{cases}
\frac{2}{(m-1)(3m-1)} & \text{if $m$ is odd,}\\
\frac{2}{m(3m-4)} & \text{if $m$ is even.}
\end{cases} 
\end{equation}

Consider a deterministic strategy. The input values $i$ are split into
two classes: the ones with $a(i)\neq b(i)$ and the ones with $a(i)=b(i)$.
We denote the number of the elements of the first class with $k$.
Let us estimate from below the probability of loss $p_{\rm loss}$. 

The cases when $x=y$ and both $x,y$ belong to the first class 
contribute $k\alpha$ to $p_{\rm loss}$.

If $i$ belongs to the first class, and $j$ to the second, then the
strategy loses in exactly one of the cases $x=i$, $y=j$ and $x=j$, $y=i$,
since exactly one of the different values $a(i)$, $b(i)$
will coincide with $a(j)=b(j)$. That contributes $k(m-k)\beta$ to $p_{\rm loss}$.

Finally, if both $x$ and $y$ are from the second class,
then $a(x)=b(x)$ and $a(y)=b(y)$, and the strategy loses if $x\neq y$ and $a(x)=a(y)$.
It is easy to check that the minimum number of loss cases 
$\lfloor(m-k-1)^2/2\rfloor$ is achieved when the number of values
$i$ with $a(i)=b(i)=0$ is the nearest possible to one half of the elements
of the class. The contribution to $p_{\rm loss}$ in this case is 
$\lfloor(m-k-1)^2/2\rfloor\beta$. 
Thus we have 
$p_{\rm loss}\geq k\alpha+k(m-k)\beta+\lfloor(m-k-1)^2/2\rfloor\beta$.

In all the parity cases of $m$ and $k$ this is a concave (quadratic) function of $k$,
so it achieves minimum at the minimal or maximal possible value of $k$
(e.~g. if $k$ and $m$ are odd, then $k=1$ or $k=m$).
A routine checking shows that in all these cases the desired upper bound 
of the game value is obtained.
\qed

\begin{Theorem}
\label{qEEm}
For even $m$: $\omega_q(EE_m)=\frac{m}{3m-4}$, and
for odd $m$: $\frac{m+1}{3m-1}\leq\omega_q(EE_m)\leq\frac{m^2+1}{(3m-1)(m-1)}$.
\end{Theorem}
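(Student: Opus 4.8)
\proof
I would first dispose of the lower bounds, which are immediate: $\omega_q(EE_m)\ge\omega_c(EE_m)$, and $\omega_c(EE_m)$ is given by Theorem~\ref{cEEm}. So the content lies in the upper bounds, and since $\omega_q(EE_m)\le\omega^{\pi}_q(EE_m)$ for \emph{every} distribution $\pi$, it suffices to bound $\omega^{\pi}_q(EE_m)$ for one well-chosen $\pi$; combining this with the lower bound will settle the even case exactly and give the stated inequality chain in the odd case. I would reuse the distribution $\pi_{\alpha,\beta}$ from the proof of Theorem~\ref{cEEm}, with $\alpha,\beta$ as in~(\ref{cEEm:eq1}).

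The plan is then to pass to the Tsirelson vector form of this XOR game. Since $V(a,b\mid i,j)=s(i,j)(-1)^{a\oplus b}$, where $s(i,j)=1$ for $i=j$ and $s(i,j)=-1$ for $i\ne j$, Tsirelson's theorem (cf.\ \cite{Cleve}) gives
$$
\omega^{\pi_{\alpha,\beta}}_q(EE_m)=\max\ \sum_{i,j}\pi_{\alpha,\beta}(i,j)\,s(i,j)\,\langle\vec x_i,\vec y_j\rangle ,
$$
the maximum being over real unit vectors $\vec x_1,\dots,\vec x_m,\vec y_1,\dots,\vec y_m$. Separating the diagonal terms and setting $X=\sum_i\vec x_i$, $Y=\sum_j\vec y_j$, the quantity being maximized equals $(\alpha+\beta)\sum_i\langle\vec x_i,\vec y_i\rangle-\beta\langle X,Y\rangle$, a quadratic form with coefficient matrix $(\alpha+\beta)I-\beta J$, where $J$ is the all-ones matrix. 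The catch is that for the chosen $\alpha,\beta$ this matrix is \emph{indefinite} (its eigenvalue on the all-ones vector is negative), so no naive eigenvalue or operator-norm estimate will be tight enough; the idea is instead to rewrite $(\alpha+\beta)m$ minus the value explicitly as a combination of squared norms.

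To carry this out I would set $\vec w_i=\vec x_i-\vec y_i$, $W=\sum_i\vec w_i=X-Y$, $S=X+Y$. Using $1-\langle\vec x_i,\vec y_i\rangle=\tfrac12\|\vec w_i\|^2$ and $\langle X,Y\rangle=\tfrac14(\|S\|^2-\|W\|^2)$ one obtains the identity
$$
(\alpha+\beta)m-\Bigl((\alpha+\beta)\sum_i\langle\vec x_i,\vec y_i\rangle-\beta\langle X,Y\rangle\Bigr)=\frac{\alpha+\beta}{2}\sum_i\|\vec w_i\|^2-\frac{\beta}{4}\|W\|^2+\frac{\beta}{4}\|S\|^2 .
$$
Discarding the nonnegative term $\tfrac{\beta}{4}\|S\|^2$ and applying Cauchy--Schwarz in the form $\|W\|^2\le m\sum_i\|\vec w_i\|^2$, the right-hand side is nonnegative as soon as $2(\alpha+\beta)\ge\beta m$, i.e.\ $2\alpha\ge\beta(m-2)$. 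A one-line check against~(\ref{cEEm:eq1}) confirms this: it is an equality for even $m$, and for odd $m$ it reduces to $(m-1)^2\ge m(m-2)$. Hence $\omega^{\pi_{\alpha,\beta}}_q(EE_m)\le(\alpha+\beta)m$, and plugging in $\alpha,\beta$ gives $(\alpha+\beta)m=\tfrac{m}{3m-4}$ for even $m$ and $(\alpha+\beta)m=\tfrac{m^2+1}{(m-1)(3m-1)}$ for odd $m$, which are precisely the asserted upper bounds.

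The main obstacle I anticipate is exactly the indefiniteness noted above --- one has to spot the quadratic identity rather than throw a norm inequality at it --- while the remaining arithmetic on $\alpha,\beta$ should be routine. As a sanity check on sharpness: for even $m$ the bound $(\alpha+\beta)m$ is actually attained against $\pi_{\alpha,\beta}$ by taking $\vec x_i=\vec y_i$ with $\sum_i\vec x_i=0$ (possible for $m\ge 2$), so together with the lower bound the even case is pinned down exactly; for odd $m$ the Cauchy--Schwarz step is strict, yet $\omega^{\pi_{\alpha,\beta}}_q(EE_m)$ still equals $(\alpha+\beta)m$, so improving past the upper bound $\tfrac{m^2+1}{(m-1)(3m-1)}$ would require a genuinely different input distribution, not a sharper estimate --- which is why this argument leaves a gap in the odd case.
\qed
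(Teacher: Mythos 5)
Your proof is correct and follows essentially the same route as the paper: lower bounds from Theorem~\ref{cEEm}, upper bounds by bounding $\omega_q^{\pi_{\alpha,\beta}}(EE_m)$ for the same distribution $\pi_{\alpha,\beta}$ via Tsirelson's theorem and a Cauchy--Schwarz-type estimate that succeeds precisely under the condition $2\alpha\geq(m-2)\beta$, which both arguments reduce to. The only difference is organizational --- the paper first optimizes out the $u_i$'s and applies the arithmetic--quadratic mean inequality to $\sum_i\|(\alpha+\beta)v_i-\beta s\|$, whereas you keep both vector families and express the deficit from $(\alpha+\beta)m$ as an explicit combination of squared norms --- and your closing observation that the bound is attained against $\pi_{\alpha,\beta}$ (so the odd-$m$ gap cannot be closed with this distribution) is a correct remark consistent with the paper's Theorem~\ref{EEalfabeta}.
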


\proof
The lower bounds follow from $\omega_q(EE_m)\geq\omega_c(EE_m)$.

To obtain the upper bounds, we will take the probability distribution
$\pi_{\alpha,\beta}$ from the proof of the previous theorem and use 
$\omega_q(EE_m)\leq\omega^{\pi_{\alpha,\beta}}_q(EE_m)$.

For the two-player XOR games where on every input exactly one
of the cases $a\oplus b=0$ and $a\oplus b=1$ is winning, it is useful
to observe that $V(a,b\mid x,y)=(-1)^a(-1)^bV(0,0\mid x,y)$, and
to introduce the matrix $V$ with $V_{xy}=V(0,0\mid x,y)$. 
Thus, for any distribution $\pi$
$$
\omega^{\pi}_q(EE_m)=\sup_{\ket\psi,{\bf M}}
\sum_{x,y,a,b}\pi(x,y)\langle\psi|M_1^{a\mid x}\otimes M_2^{b\mid y}|\psi\rangle 
(-1)^a(-1)^bV_{xy}.
$$

The Tsirelson's theorem \cite{T} implies that this game value is equal to
$$
\sup_d\max_{u_i: \|u_i\|=1}\max_{v_j: \|v_j\|=1}\sum_{i=1}^m\sum_{j=1}^m\pi(i,j)V_{ij}(u_i,v_j)
$$
where $u_1, \ldots, u_m, v_1, \ldots, v_m\in\bbbr^d$ and $(u_i,v_j)$ is the scalar product.

The part of the sum containing $u_i$ is 
$$
\sum_{j=1}^m\pi(i,j)V_{ij}(u_i,v_j)=\left(u_i,\sum_{j=1}^m\pi(i,j)V_{ij}v_j\right).
$$
To maximize the scalar product, $u_i$ must be the unit vector in the direction of
$\sum_{j=1}^m\pi(i,j)V_{ij}v_j$. 

For the EQUAL-EQUAL game and the distribution $\pi_{\alpha,\beta}$ we have 
$V_{ij}=1$ and $\pi_{\alpha,\beta}(i,j)=\alpha$ if $i=j$, 
$V_{ij}=-1$ and $\pi_{\alpha,\beta}(i,j)=\beta$ if $i\neq j$.
So we have to maximize the sum
$$
S=\sum_{i=1}^m\left\|\sum_{j=1}^m\pi_{\alpha,\beta}(i,j)V_{ij}v_j\right\|=
\sum_{i=1}^m\left\|\alpha v_i-\beta\sum_{j=1,j\neq i}^mv_j\right\|.
$$

Let us denote $s = \sum_{j=1}^{m} v_j$ and apply the inequality between 
the arithmetic and quadratic means (and use the fact that $\|v_i\|=1$):
\begin{align*}
S^2 & \leq m \sum_{i=1}^m \| \alpha v_i - \beta (s - v_i) \|^2 
= m \sum_{i=1}^m \| (\alpha+\beta) v_i - \beta s \|^2\\
&= m\left( \sum_{i=1}^m (\alpha+\beta)^2 \|v_i\|^2 - \sum_{i=1}^m 2 (\alpha+\beta)\beta ( v_i, s ) + \sum_{i=1}^m \beta^2 \|s\|^2 \right)\\
&= m \left( (\alpha+\beta)^2m - 2 (\alpha+\beta)\beta \left(\sum_{i=1}^m v_i, s \right) + m \beta^2 \|s\|^2 \right) \\
&= m ( (\alpha+\beta)^2m - 2 (\alpha+\beta)\beta \|s\|^2 + m \beta^2 \|s\|^2 ) \\
&= m^2(\alpha+\beta)^2 + \|s\|^2 m\beta(m \beta - 2 (\alpha+\beta)).
\end{align*}
With our values of $\alpha$ and $\beta$ (see equation (\ref{cEEm:eq1}))
one can calculate that the coefficient at $\|s\|^2$ is 0 for even $m$ and
$-\frac{4}{(m-1)^2(3m-1)^2}$ (negative) for odd $m$,
so dropping this summand and extracting the square root
we get $S\leq m(\alpha+\beta)$. Substituting the values of $\alpha$ and $\beta$
according to equation (\ref{cEEm:eq1}) we get the desired estimations.
\qed

It follows from this result that at the worst-case distribution for any even $m$
the quantum strategy cannot achieve any advantage over the classical
strategies (and for odd $m$ there is no difference asymptotically).
It was quite surprising for us. In fact, it can be proven for any
of the distributions $\pi_{\alpha,\beta}$.

\begin{Theorem}
\label{EEalfabeta}
If $\beta\geq\frac{2}{m(3m-4)}$ then
$$
\omega^{\pi_{\alpha,\beta}}_q(EE_m)=\omega^{\pi_{\alpha,\beta}}_c(EE_m)=2\beta(m-1)m-1.
$$
If $\beta < \frac{2}{m(3m-4)}$ then for even $m$: 
$$
\omega^{\pi_{\alpha,\beta}}_q(EE_m)=\omega^{\pi_{\alpha,\beta}}_c(EE_m)=1-\beta(m-2)m,
$$
and for odd $m$: 
$$
1-\beta(m-1)^2\leq\omega^{\pi_{\alpha,\beta}}_c(EE_m)\leq\omega^{\pi_{\alpha,\beta}}_q(EE_m)\leq 1-\beta(m-2)m.
$$
\end{Theorem}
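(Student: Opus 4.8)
The normalization of a probability distribution forces $m\alpha+m(m-1)\beta=1$, so $\alpha=\tfrac1m-(m-1)\beta$ is determined by $\beta$, which ranges over $[0,\tfrac1{m(m-1)}]$; note that $\tfrac2{m(3m-4)}$ lies in this interval for $m\ge2$. The plan is to compute $\omega^{\pi_{\alpha,\beta}}_c$ (for odd $m$ below the threshold, only a lower bound) by the deterministic-strategy count from the proof of Theorem~\ref{cEEm}, to bound $\omega^{\pi_{\alpha,\beta}}_q$ from above using the Tsirelson / arithmetic--quadratic-mean estimate in the proof of Theorem~\ref{qEEm}, and to match the two via $\omega^{\pi_{\alpha,\beta}}_c\le\omega^{\pi_{\alpha,\beta}}_q$.

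For the classical value, recall from the proof of Theorem~\ref{cEEm} that a deterministic strategy with $k$ inputs in the first class (those with $a(i)\ne b(i)$) has loss probability at least $f(k):=k\alpha+k(m-k)\beta+\lfloor(m-k-1)^2/2\rfloor\beta$, with equality for a balanced coloring of the second class, so $\omega^{\pi_{\alpha,\beta}}_c=1-2\min_{0\le k\le m}f(k)$. As observed there, $f$ is concave quadratic on each residue class of $k\bmod 2$, hence minimized at $k\in\{0,1,m-1,m\}$. Substituting $\alpha=\tfrac1m-(m-1)\beta$ gives $f(0)=\lfloor(m-1)^2/2\rfloor\beta$ and $f(m)=m\alpha=1-m(m-1)\beta$, while a short check yields $f(1)>f(0)$, and $f(m-1)\ge f(m)$ whenever $\beta\ge\tfrac1{2m(m-1)}$. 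Comparing $f(0)$ and $f(m)$ shows $f(m)\le f(0)$ exactly when $\beta\ge\tfrac2{m(3m-4)}$ for even $m$ and when $\beta\ge\tfrac2{(m-1)(3m-1)}$ for odd $m$; since $\tfrac2{(m-1)(3m-1)}\le\tfrac2{m(3m-4)}$, in both parities $\beta\ge\tfrac2{m(3m-4)}$ yields $\min f=f(m)$ and hence $\omega^{\pi_{\alpha,\beta}}_c=1-2m\alpha=2m(m-1)\beta-1$. For $\beta<\tfrac2{m(3m-4)}$ one similarly checks $f(0)\le f(m-1)$ and $f(0)\le f(m)$, so $\min f=f(0)$ when $m$ is even, giving $\omega^{\pi_{\alpha,\beta}}_c=1-2f(0)=1-m(m-2)\beta$; for odd $m$ one only concludes $\min f\le f(0)$, i.e.\ $\omega^{\pi_{\alpha,\beta}}_c\ge1-(m-1)^2\beta$.

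For the quantum upper bound, the proof of Theorem~\ref{qEEm} gives, with $s=\sum_j v_j$, $\|v_j\|=1$, and $\omega^{\pi_{\alpha,\beta}}_q=\sup S$, the estimate $S^2\le m^2(\alpha+\beta)^2+\|s\|^2\,m\beta\bigl(m\beta-2(\alpha+\beta)\bigr)$. The key point is the identity $(\alpha+\beta)^2+m\beta\bigl(m\beta-2(\alpha+\beta)\bigr)=\bigl(\alpha-(m-1)\beta\bigr)^2$; optimizing over $\|s\|^2\in[0,m^2]$ therefore yields $S\le m\max\bigl(\alpha+\beta,\,|\alpha-(m-1)\beta|\bigr)$. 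Inserting $\alpha=\tfrac1m-(m-1)\beta$ (so $\alpha+\beta=\tfrac1m-(m-2)\beta$ and $(m-1)\beta-\alpha=2(m-1)\beta-\tfrac1m$), one sees this maximum equals $\alpha+\beta$ exactly when $\beta\le\tfrac2{m(3m-4)}$ and equals $(m-1)\beta-\alpha$ when $\beta\ge\tfrac2{m(3m-4)}$. Hence $\omega^{\pi_{\alpha,\beta}}_q\le1-m(m-2)\beta$ in the first range and $\omega^{\pi_{\alpha,\beta}}_q\le2m(m-1)\beta-1$ in the second.

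Combining the two sides: for $\beta\ge\tfrac2{m(3m-4)}$ the classical value $2m(m-1)\beta-1$ (which holds for both parities) already equals the quantum upper bound, so $\omega^{\pi_{\alpha,\beta}}_q=\omega^{\pi_{\alpha,\beta}}_c=2m(m-1)\beta-1$; for $\beta<\tfrac2{m(3m-4)}$ and even $m$ the classical value $1-m(m-2)\beta$ equals the quantum upper bound, giving equality again; and for $\beta<\tfrac2{m(3m-4)}$ and odd $m$ one is left with $1-(m-1)^2\beta\le\omega^{\pi_{\alpha,\beta}}_c\le\omega^{\pi_{\alpha,\beta}}_q\le1-m(m-2)\beta$, as claimed. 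I expect the main obstacle to be the classical bookkeeping: carrying $\lfloor(m-k-1)^2/2\rfloor$ correctly through all parities and confirming that the piecewise-concave $f$ is minimized at $k=0$ or $k=m$ in each regime (the proof of Theorem~\ref{cEEm} supplies the template), together with the elementary but fiddly algebra verifying that the sign change of the coefficient of $\|s\|^2$ and the switch of the dominant term in $\max(\alpha+\beta,|\alpha-(m-1)\beta|)$ both occur at exactly $\beta=\tfrac2{m(3m-4)}$.
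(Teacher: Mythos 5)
Your proposal is correct, and the quantum half is the paper's argument verbatim: both take the estimate $S^2\le m^2(\alpha+\beta)^2+\|s\|^2\,m\beta\bigl(m\beta-2(\alpha+\beta)\bigr)$ from the proof of Theorem~\ref{qEEm}, observe that the coefficient of $\|s\|^2$ changes sign exactly at $\beta=\frac{2}{m(3m-4)}$ after substituting $\alpha=\frac1m-(m-1)\beta$, and optimize $\|s\|^2$ over $[0,m^2]$ accordingly (your identity $(\alpha+\beta)^2+m\beta(m\beta-2(\alpha+\beta))=(\alpha-(m-1)\beta)^2$ is just a tidier way of evaluating the endpoint $\|s\|=m$). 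The only divergence is on the classical side: the paper does not re-minimize the deterministic loss function $f(k)$; it simply evaluates the two explicit randomized strategies already exhibited in the proof of Theorem~\ref{cEEm} (the all-$(0,1)$ strategy for $\beta\ge\frac{2}{m(3m-4)}$, the balanced mix for $\beta<\frac{2}{m(3m-4)}$) to get lower bounds $2\beta(m-1)m-1$ and $1-2\beta\lfloor(m-1)^2/2\rfloor$, and then lets the quantum upper bound close the gap via $\omega_c\le\omega_q$. Your route instead computes $\omega^{\pi_{\alpha,\beta}}_c$ exactly by locating the minimum of the concave-by-parity function $f$ at $k\in\{0,1,m-1,m\}$; this is more bookkeeping than the theorem needs (for the equality cases the lower bound alone suffices once the quantum bound matches it), but it is sound, and it yields as a by-product the exact classical value in the odd-$m$, small-$\beta$ regime where the paper leaves only the sandwich. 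Both arguments reuse the same two ingredients from Theorems~\ref{cEEm} and~\ref{qEEm}, so I would classify this as the same proof with a slightly heavier classical step.
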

\proof
For the classical game value, we use the strategies from the proof of
Theorem \ref{cEEm}. In the case $\beta < \frac2{m(3m-4)}$ we use the 
strategy mix described in that Theorem. Recall that it wins with
certainty if $x=y$ and wins with probability $1-\lfloor\frac{(m-1)^2}{2}\rfloor/(m(m-1))$
if $x\neq y$. Thus its game value is
$$
\alpha m+\beta(m-1)m\left(1-\frac{2\lfloor(m-1)^2/2\rfloor}{m(m-1)}\right).
$$
Since the sum of the probabilities of the distribution is $1$, we have
$m\alpha+m(m-1)\beta=1$ or $\alpha=(1-m(m-1)\beta)/m$.
Substituting this and simplifying, we get that the game value is 
$1-2\beta\lfloor\frac{(m-1)^2}{2}\rfloor$. This is equal to
$1-\beta(m-2)m$ for even $m$, and $1-\beta(m-1)^2$ for odd $m$.
That gives the needed lower bounds for $\omega^{\pi_{\alpha,\beta}}_c$
in this case.

In the case $\beta \geq \frac2{m(3m-4)}$ we use the 
strategy responding for all $i$ with $a(i)=0$ and $b(i)=1$.
Since it wins iff $x\neq y$, its game value is
$\beta(m-1)m-\alpha m$.
Substituting the expression of $\alpha$ by $\beta$ we get that the 
game value is $2\beta(m-1)m-1$,
so $\omega^{\pi_{\alpha,\beta}}_c\geq 2\beta(m-1)m-1$.

For the quantum game value we will use the notation and intermediary 
results from the proof of Theorem \ref{qEEm}.
We obtained there that $(\omega^{\pi_{\alpha,\beta}}_q)^2\leq 
m^2(\alpha+\beta)^2 + \|s\|^2 m\beta(m \beta - 2 (\alpha+\beta))$.
Substituting the $\alpha$ we get 
$$
(\omega^{\pi_{\alpha,\beta}}_q)^2\leq 
(1-\beta(m-2)m)^2 + \|s\|^2 \beta m(3m-4)\left(\beta - \frac2{m(3m-4)}\right).
$$

If $\beta - \frac2{m(3m-4)}<0$ then the second summand is non-positive,
so we can drop it obtaining 
$\omega^{\pi_{\alpha,\beta}}_q\leq 1-\beta(m-2)m$.

If $\beta - \frac2{m(3m-4)}\geq 0$ then the second summand is non-negative,
so we can estimate it from above by maximizing $\|s\|$. Since
$s$ is a sum of $m$ vectors of unit length, $\|s\|\leq m$.
Substituting this value and simplifying we get
$\omega^{\pi_{\alpha,\beta}}_q\leq 2\beta(m-1)m-1$.
\qed

\begin{Corollary}
For $m\geq 4$: $\omega^{\rm uni}_q(EE_m)=\omega^{\rm uni}_c(EE_m)=\frac{m-2}{m}$.
\end{Corollary}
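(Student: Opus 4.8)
The plan is to recognize the uniform distribution as a special case of the family $\pi_{\alpha,\beta}$ already analyzed in Theorem \ref{EEalfabeta}, so that the corollary becomes a direct substitution. On $X\times Y$ with $|X|=|Y|=m$ there are exactly $m$ diagonal pairs $(i,i)$ and $m(m-1)$ off-diagonal pairs, hence the uniform distribution assigns probability $1/m^2$ to every pair; that is, it is precisely $\pi_{\alpha,\beta}$ with $\alpha=\beta=\frac1{m^2}$. Therefore $\omega^{\rm uni}_q(EE_m)=\omega^{\pi_{\alpha,\beta}}_q(EE_m)$ and $\omega^{\rm uni}_c(EE_m)=\omega^{\pi_{\alpha,\beta}}_c(EE_m)$ for these values of $\alpha,\beta$, and it suffices to plug $\alpha=\beta=\frac1{m^2}$ into the formulas of Theorem \ref{EEalfabeta}.

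The next step is to decide which of the two cases of Theorem \ref{EEalfabeta} applies, i.e. to compare $\beta=\frac1{m^2}$ with the threshold $\frac2{m(3m-4)}$. Since both quantities are positive for $m\geq2$, cross-multiplying shows $\frac1{m^2}\geq\frac2{m(3m-4)}$ is equivalent to $m(3m-4)\geq 2m^2$, i.e. to $m^2-4m=m(m-4)\geq0$, which holds exactly for $m\geq4$ (with equality at $m=4$). Thus for every $m\geq4$ we are in the first case of Theorem \ref{EEalfabeta}, where
\[
\omega^{\pi_{\alpha,\beta}}_q(EE_m)=\omega^{\pi_{\alpha,\beta}}_c(EE_m)=2\beta(m-1)m-1 .
\]
I would emphasize that this case gives an \emph{exact} value with no parity gap, which is precisely why the hypothesis $m\geq4$ is the natural one for the corollary.

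Finally, substituting $\beta=\frac1{m^2}$ yields
\[
2\beta(m-1)m-1=\frac{2(m-1)}{m}-1=\frac{m-2}{m},
\]
which is the claimed common value of the quantum and classical uniform game values. There is no genuine obstacle in this argument; the only point demanding a moment's care is the direction of the inequality $\frac1{m^2}\geq\frac2{m(3m-4)}$ and confirming that the boundary case $m=4$ still lies in the ``$\beta\geq\frac2{m(3m-4)}$'' branch of Theorem \ref{EEalfabeta}, which it does. For completeness I would also note why $m=2,3$ are excluded: there $\beta=\frac1{m^2}<\frac2{m(3m-4)}$, so the second branch of Theorem \ref{EEalfabeta} governs, and for the odd case $m=3$ it gives only the non-matching estimates $\frac59\leq\omega^{\rm uni}_c(EE_3)\leq\omega^{\rm uni}_q(EE_3)\leq\frac23$.
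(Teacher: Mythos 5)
Your proposal is correct and follows exactly the paper's own route: identify the uniform distribution as $\pi_{\alpha,\beta}$ with $\alpha=\beta=1/m^2$, check that $m\geq 4$ places it in the first branch of Theorem \ref{EEalfabeta}, and substitute. Your write-up merely spells out the threshold comparison and the boundary case $m=4$, which the paper leaves implicit.
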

\proof
The uniform distribution is $\pi_{\alpha,\beta}$ with $\alpha=\beta=1/m^2$.
Substituting these values in the previous theorem gives the result.
\qed


\subsection{n-party AND game}


$n$-party AND game ($nAND$) is a symmetric XOR game with binary inputs 
$X_1=\ldots=X_n=\{0,1\}$ and $V({\bf a}\mid{\bf x})=(\bigoplus_{i=1}^na_i=\bigwedge_{i=1}^nx_i)$.

Although this is a natural generalization of the CHSH game
(compare the winning conditions),
it appears that this game has not been studied before. Possibly, this is due to the fact that in the average case the
game can be won classically with a probability that is very close to 1 by a trivial strategy:
all players always outputting $a_i=0$. If this game is studied in the worst-case scenario,
it becomes more interesting. The following theorem shows that
$\lim_{n\to\infty}\omega_c(nAND)=1/3$.

\begin{Theorem}
\label{cnAND}
$\omega_c(nAND)=2^{n-2}/(3\cdot 2^{n-2}-1)$.
\end{Theorem}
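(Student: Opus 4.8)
The plan is to regard this as a finite zero‑sum game between the players and a referee who chooses the input distribution, so that $\omega_c(nAND)$ equals its value, and to prove matching lower and upper bounds. The first step is to enumerate the deterministic strategies. Each player's function $a_i:\{0,1\}\to\{0,1\}$ is affine over $GF(2)$, say $a_i(x_i)=c_i\oplus d_ix_i$, so a deterministic strategy is determined by $C=\bigoplus_i c_i$ and the set $T=\{i:d_i=1\}$, and the parity of the outputs on input ${\bf x}$ is $C\oplus\bigoplus_{i\in T}x_i$. Writing $Z({\bf x})=\{i:x_i=0\}$, one checks that on any ${\bf x}\neq(1,\ldots,1)$ (where the target parity is $0$) the strategy wins iff $\bigoplus_{i\in T}x_i=C$, equivalently iff $|T\cap Z({\bf x})|\equiv C+|T|\pmod 2$, while on $(1,\ldots,1)$ it wins iff $|T|+C$ is odd. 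Call a strategy \emph{good} if $T\neq\emptyset$ and $|T|+C$ is odd; there are exactly $2^n-1$ good strategies (one per nonempty $T$), each wins on $(1,\ldots,1)$, and each loses exactly on the $2^{n-1}-1$ inputs ${\bf x}\neq(1,\ldots,1)$ with $|T\cap Z({\bf x})|$ even. The strategy $a_i(x_i)=0$ for all $i$ wins on every ${\bf x}\neq(1,\ldots,1)$ and loses only on $(1,\ldots,1)$.

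\textbf{Lower bound.} I would exhibit the mixture that plays the constant‑$0$ strategy with probability $\lambda=\frac{2^{n-1}-1}{3\cdot 2^{n-1}-2}$ and a uniformly random good strategy with probability $1-\lambda$. On $(1,\ldots,1)$ the constant strategy loses and every good strategy wins, so the winning probability is $1-\lambda$. On ${\bf x}\neq(1,\ldots,1)$ the constant strategy wins, and a good strategy with set $T$ wins iff $|T\cap Z({\bf x})|$ is odd; since $Z({\bf x})\neq\emptyset$, exactly $2^{n-1}$ of the $2^n-1$ nonempty sets $T$ have this property, so the winning probability is $\lambda+(1-\lambda)\frac{2^{n-1}}{2^n-1}$. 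The chosen $\lambda$ equalizes the two expressions to $p^\ast=\frac{2^n-1}{3\cdot 2^{n-1}-2}$, giving $\omega_c(nAND)\geq 2p^\ast-1=\frac{2^{n-2}}{3\cdot 2^{n-2}-1}$.

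\textbf{Upper bound.} I would take the input distribution $\pi$ with $\pi(1,\ldots,1)=q$ and $\pi({\bf x})=r$ for every ${\bf x}\neq(1,\ldots,1)$, where $q=(2^{n-1}-1)r$ and normalization forces $r=\frac{1}{3\cdot 2^{n-1}-2}$. A deterministic strategy with losing set $L$ has value $1-2\pi(L)$ against $\pi$, so it suffices to show $\pi(L)\geq q$ always. For the constant‑$0$ strategy $\pi(L)=\pi(1,\ldots,1)=q$; for a good strategy $\pi(L)=(2^{n-1}-1)r=q$; the constant‑$1$ strategy has $\pi(L)=(2^n-1)r>q$; and a strategy with $T\neq\emptyset$ and $|T|+C$ even loses on $(1,\ldots,1)$ and on $2^{n-1}$ further inputs, so $\pi(L)=q+2^{n-1}r>q$. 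Hence $\omega_c(nAND)\leq\omega^\pi_c(nAND)=1-2q=\frac{2^{n-2}}{3\cdot 2^{n-2}-1}$, which matches the lower bound.

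\textbf{Main obstacle.} The genuine content is the structural bookkeeping of step one: identifying all deterministic strategies through the affine parametrization and computing the \emph{exact} sizes of their losing sets, together with the counting identity that for every nonempty $Z\subseteq\{1,\ldots,n\}$ exactly $2^{n-1}$ of the $2^n-1$ nonempty subsets $T$ satisfy $|T\cap Z|$ odd. This symmetry is precisely what lets one mixture win on every input simultaneously and makes the symmetric $\pi$ tight; once it is in place, both bounds reduce to the short algebraic step of equalizing (solving for $\lambda$, resp.\ for $q$ and $r$), and I would record the degenerate cases ($n=1$ gives $\omega_c=1$, $n=2$ recovers CHSH with $\omega_c=1/2$) as consistency checks.
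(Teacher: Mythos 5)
Your proposal is correct and follows essentially the same route as the paper: the same mixture (constant-$0$ with probability $\frac{2^{n-1}-1}{3\cdot 2^{n-1}-2}$ plus a uniform choice among the $2^n-1$ odd-parity non-constant strategies, one per nonempty $T$) for the lower bound, and the same distribution ($\pi(1,\ldots,1)=(2^{n-1}-1)r$, all other inputs $r=\frac{1}{3\cdot 2^{n-1}-2}$) for the upper bound. The only difference is presentational: your affine parametrization by $(C,T)$ and the parity-counting identity replace the paper's explicit pairing of inputs differing in one coordinate, and both yield the same exact losing-set sizes.
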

\proof
First, let us notice that there are only four possible deterministic strategies
for any individual player $A_i$: $a_i(x_i)$ can be $0$, $1$, $x_i$ or $\neg x_i$.
We will denote a deterministic $n$-player strategy by the tuple
of individual strategies $(a_1(x_1),\ldots,a_n(x_n))$.

For the lower bound
we construct a probabilistic mix of deterministic
strategies which achieves the needed game value on all inputs:
we take the strategy $(0,0,\ldots,0)$ with probability 
$(2^{n-1}-1)/(3\cdot 2^{n-1}-2)$, and we take a strategy picked
uniformly randomly from the class desribed below with 
the remaining probability $(2^n-1)/(3\cdot 2^{n-1}-2)$.

This class consists of all the strategies where all the non-zero
elements $a_i(x_i)$ are $x_i$ except for the last non-zero element which
is either $x_i$ or $\neg x_i$ picked so that the total number of $x_i$'s
in the tuple is odd. It is easy to check that all these strategies
win on the input $x_1=\ldots=x_n=1$ (due to the odd number of $x_i$'s), 
and that there are $2^n-1$ such strategies (every element of the tuple 
can be either $0$ or non-zero; the all-zero tuple is excluded).

Furthermore, for any input apart from the all-ones input there are
$2^{n-1}$ strategies of the class winning on this input. Indeed, any such input
has at least one $0$, suppose $i$ is the position of the first $0$.
Then we can split all the strategies of the class except one in pairs:
if the strategy has $0$ in the $i$-th position, then we obtain its
match by exchanging the $0$ with $x_i$ and adjusting the negation
as appropriate, and vice versa, if the $i$-th element is non-zero,
we put there $0$ and adjust the negation. Checking cases
(regarding the position of negation), it is easy
to see that the strategies of one pair have different outputs
on the given input, so exactly half of them wins.
The one strategy without pair is $(0,\ldots,0,x_i,0,\ldots,0)$, and it wins.

The strategy $(0,0,\ldots,0)$ loses on the all-ones input, and wins on all the others.
So, as needed, the game value of our mix of strategies on the all-ones input is
$$
\frac{2^n-1}{3\cdot 2^{n-1}-2}-\frac{2^{n-1}-1}{3\cdot 2^{n-1}-2}=\frac{2^{n-2}}{3\cdot 2^{n-2}-1},
$$
and on all the other inputs
$$
\frac{2^{n-1}-1}{3\cdot 2^{n-1}-2}+\frac{2^{n-1}}{3\cdot 2^{n-1}-2}-\frac{2^{n-1}-1}{3\cdot 2^{n-1}-2}=\frac{2^{n-2}}{3\cdot 2^{n-2}-1}.
$$

For the upper bound we introduce a distribution $\pi$ for which no deterministic
strategy can exceed the desired value: $\pi(1,1,\ldots,1)=(2^{n-1}-1)/(3\cdot 2^{n-1}-2)$,
and for all the other inputs $\pi(x_1,\ldots,x_n)=1/(3\cdot 2^{n-1}-2)$.

Any constant strategy (i.~e. consisting only of $0$ and $1$) can do no better
than losing on the all-ones input and winning on all the other ones, and then
the game value is
$$
\frac{2^n-1}{3\cdot 2^{n-1}-2}-\frac{2^{n-1}-1}{3\cdot 2^{n-1}-2}=\frac{2^{n-2}}{3\cdot 2^{n-2}-1}.
$$

If the strategy is not constant, let $i$ be the index of its first non-constant
element ($x_i$ or $\neg x_i$). Then all the inputs are split into pairs
differing only in the $i$-th bit, and the strategy having different values
for the inputs of one pair, wins only in one of them, except for the pair
containing the all-ones input where it may win on both inputs, because it
is the only case when the responses {\em need} to be different. Thus it
can maximally achieve the game value
$$
\frac{2^{n-1}-1}{3\cdot 2^{n-1}-2}+\frac{1}{3\cdot 2^{n-1}-2}+\frac{2^{n-1}-1}{3\cdot 2^{n-1}-2}-\frac{2^{n-1}-1}{3\cdot 2^{n-1}-2}=\frac{2^{n-2}}{3\cdot 2^{n-2}-1}.
$$
\qed



In the quantum case, since the game is symmetric with binary inputs,
we can introduce parameters $c_i$ being equal to the value of
$V((0,\ldots,0)\mid{\bf x})$ on any input ${\bf x}$ containing
$i$ ones and $n-i$ zeroes, and $p_i$ being equal to the probability
(determined by $\pi$) of such kind of input.
According to \cite{AKNR}, for such game $G$:
\[ \omega^\pi_q(G)=\max_{z:|z|=1} \left| \sum_{i=0}^n p_i c_i z^i \right| \]
where $z$ is a complex number.
By Yao's principle,
\[ \omega_q(G)=\min_{p_0, \ldots, p_n:\sum p_i=1}\;
\max_{z:|z|=1} 
\left| \sum_{i=0}^n p_i c_i z^i \right| .\]
We have for the $nAND$ game: $c_0=\ldots=c_{n-1}=1$ and $c_n=-1$.
The following lemma implies that, for large $n$, the quantum value of the game
is $\frac{1}{3}+o(1)$ and, hence, the maximum winning probability that can be achieved
is $\frac{2}{3}+o(1)$.

\begin{Lemma}
\label{lem:ineq}
$$
\lim_{n \to \infty}
\;
\min_{p_0, ..., p_n:\sum p_i=1}
\;
\max_{z:|z|=1}
\left|
\sum_{i=0}^{n-1}p_i z^i-p_nz^n
\right| \leq \frac{1}{3}
$$
\end{Lemma}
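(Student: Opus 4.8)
The plan is not to compute the min--max directly but to exhibit, for every large $n$, a single distribution $(p_0,\dots,p_n)$ making the inner $\max_{|z|=1}$ at most $\tfrac13+o(1)$; since the left side of the lemma is an infimum over such distributions, this is enough. First I would rewrite the objective in a form that reveals what to aim for. Writing $z=e^{i\theta}$, multiplying the polynomial by $z^{-n}$ (which has modulus $1$) and reindexing by $k=n-i$, one gets
$$
\Bigl|\sum_{i=0}^{n-1}p_iz^i-p_nz^n\Bigr|=\Bigl|\sum_{k=1}^{n}p_{n-k}e^{-ik\theta}-p_n\Bigr| .
$$
Next I would commit to $p_n=\tfrac13$ and regard $v_k:=\tfrac32\,p_{n-k}$ as a probability distribution on $\{1,\dots,n\}$ (it sums to $\tfrac32(1-p_n)=1$) with characteristic function $\phi(\theta)=\sum_{k=1}^{n}v_ke^{ik\theta}$; then the quantity above equals $\tfrac23\bigl|\phi(\theta)-\tfrac12\bigr|$. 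So the whole problem reduces to choosing $v$ so that, uniformly in $\theta$, $\phi(\theta)$ stays within $\tfrac12+o(1)$ of $\tfrac12$ --- essentially, so that the curve $\theta\mapsto\phi(\theta)$ (nearly) stays inside the disc $\{w:|w-\tfrac12|\le\tfrac12\}$.

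The guiding observation is that the characteristic function of the exponential distribution, $s\mapsto\frac{1}{1-is}$, lies \emph{exactly} on the boundary circle of that disc for every real $s$: indeed $\bigl|\tfrac{1}{1-is}-\tfrac12\bigr|=\tfrac{|1+is|}{2|1-is|}=\tfrac12$. So I would take $v$ to be a truncated, rescaled geometric (a discretized exponential): concretely $p_i$ proportional to $e^{i/m}$ for $0\le i\le n-1$ and $p_n=\tfrac13$, which makes $v_k$ proportional to $e^{-k/m}$ on $\{1,\dots,n\}$, with scale $m=m_n:=\lfloor\sqrt n\rfloor$. Any $m_n$ with $m_n\to\infty$ and $n/m_n\to\infty$ works: the first condition makes the discretization accurately approximate the continuous exponential, the second makes the truncation at $k=n$ harmless (the discarded tail weight is $\approx e^{-n/m}$).

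To finish I would sum the geometric series for $\phi(\theta)$ and split $\theta\in[-\pi,\pi]$ into two ranges. For $|\theta|\le m^{-1/2}$, Taylor-expanding $e^{1/m}-1$ and $1-e^{-(1/m-i\theta)}$ and discarding the $O(e^{-n/m})$ truncation term gives $\phi(\theta)=\frac{1}{1-im\theta}+O(m^{-1/2})$ uniformly, whence $|\phi(\theta)-\tfrac12|\le\tfrac12+O(m^{-1/2})$ by the boundary identity above (with $s=m\theta$). For $m^{-1/2}\le|\theta|\le\pi$, the crude bound $|\phi(\theta)|\le\frac{C}{m\,|1-e^{-1/m}e^{i\theta}|}$ combined with $|1-e^{-1/m}e^{i\theta}|\ge c|\theta|\ge c\,m^{-1/2}$ (a consequence of $1-\cos\theta\ge 2\theta^2/\pi^2$) gives $|\phi(\theta)|=O(m^{-1/2})$, so again $|\phi(\theta)-\tfrac12|\le\tfrac12+O(m^{-1/2})$. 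Combining the ranges, $\max_\theta|\phi(\theta)-\tfrac12|\le\tfrac12+O(n^{-1/4})$, hence for this distribution $\max_{|z|=1}\bigl|\sum_{i<n}p_iz^i-p_nz^n\bigr|\le\tfrac13+O(n^{-1/4})$; letting $n\to\infty$ proves the lemma.

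I expect the main obstacle to be the uniform estimate near $\theta=0$: one needs the approximation $\phi(\theta)\approx\frac1{1-im\theta}$ to carry an $o(1)$ error uniformly over a $\theta$-window that is many multiples of $1/m$ wide, so that it overlaps the region where the tail bound $|\phi(\theta)|=o(1)$ takes over --- and it is precisely balancing the discretization error ($\sim 1/m$, plus the window width), the truncation error ($\sim e^{-n/m}$), and this overlap requirement that pins down a scale like $m=\sqrt n$. A lesser but genuine hazard is getting the bookkeeping of the reduction right (the reindexing $k=n-i$ and the conjugation), so that one is truly driving $\phi$ into the disc $\{|w-\tfrac12|\le\tfrac12\}$ --- the disc tangent to the imaginary axis at the origin, which is consistent with $\phi$ having $\theta$-mean $0$ (so $\phi(\theta)$ must be small for most $\theta$) while $\phi(0)=1$.
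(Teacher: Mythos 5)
Your proposal is correct, and it is in essence the paper's own construction: you choose exactly the same distribution (an atom of $1/3$ on the top coefficient and a truncated geometric with ratio $e^{-1/\sqrt{n}}$ on the rest, up to the immaterial floor in $m$), perform the same reduction (divide by $z^n$, reindex), and aim at the same target, namely that the normalized geometric sum stays within $o(1)$ of the circle $\{w : |w-\tfrac12|=\tfrac12\}$. The only real difference is how that last fact is verified. The paper sums the geometric series exactly, drops the $O(e^{-\sqrt n})$ truncation term, and observes (its Claim~1) that $w\mapsto \frac{\delta w}{1-(1-\delta)w}$ is a M\"obius map, so the image of the unit circle is exactly a circle through $1$ and $\frac{\delta}{\delta-2}$, which converges to the circle centered at $\tfrac12$ of radius $\tfrac12$ as $\delta\to 0$ --- no range-splitting or Taylor expansion needed. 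You instead approximate the characteristic function locally by $\frac{1}{1-im\theta}$ (which lies exactly on that circle) for $|\theta|\le m^{-1/2}$ and use a crude $O(m^{-1/2})$ tail bound elsewhere; this is more computational but buys an explicit $O(n^{-1/4})$ rate and a conceptual explanation (the exponential distribution's characteristic function traces the boundary circle) for why the geometric weights are the right choice. Your quantitative bookkeeping checks out: the relative error of $1-e^{-u}=u(1+O(|u|))$ with $u=\tfrac1m-i\theta$ is $O(m^{-1/2})$ on the inner range, and $|1-e^{-1/m}e^{i\theta}|\ge c|\theta|$ handles the outer range, so both regimes give $|\phi(\theta)-\tfrac12|\le\tfrac12+O(m^{-1/2})$ as you claim.
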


\proof
We prove the lemma by picking particular values of $p_i$
and showing for them that the limit is equal to $1/3$.
We take $p_n=1/3$, $p_i=pq^{n-i}$ for $i=0,\ldots,n-1$
where $q=e^{-\frac{1}{\sqrt{n}}}$ and $p$ is chosen so that $p\sum_{i=1}^{n}q^i=\frac{2}{3}$, i.e. $p=\frac23\frac{1-q}{q(1-q^n)}$.
Additionally, since $|z|=1$, we can divide the expression within
modulus by $z^n$ and use the substitution $w=1/z$. We obtain
\begin{equation}
\label{lem1:eq2}
\lim_{n \to \infty}
\max_{w:|w|=1}
\left|
p\sum_{i=1}^{n}(qw)^i-\frac13
\right|=
\lim_{n \to \infty}
\max_{w:|w|=1}
\left|
\frac23\frac{1-q}{1-q^n}\frac{w(1-q^nw^n)}{1-qw}-\frac13
\right|.
\end{equation}

From $\lim_{n \to \infty}q^n=\lim_{n \to \infty}e^{-\sqrt{n}}=0$ we get $\lim_{n \to \infty}(1-q^n)=1$ and, since $|w|=1$, $\lim_{n \to \infty}(1-q^nw^n)=1$.
Thus \eqref{lem1:eq2} is equal to
\begin{equation}
\label{lem1:eq3}
\lim_{n \to \infty}
\max_{w:|w|=1}
\left|
\frac23\frac{(1-q)w}{1-qw}-\frac13
\right|.
\end{equation}

\begin{Claim}
\label{claim:ineq}
For each $\epsilon>0$ there exists $\delta_0$ such that the inequality 
\begin{equation}
\label{claim1:eq1}
\left|\left|\frac{2\delta w}{1-(1-\delta)w}-1\right|-1\right|<\epsilon
\end{equation}
holds where $0<\delta<\delta_0$ and $z\in C$, and $|w|=1$.
\end{Claim}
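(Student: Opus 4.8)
The plan is to reduce this to an explicit one-variable estimate. First I would absorb the ``$-1$'' into the fraction: with
$$ f(w) \;=\; \frac{2\delta w}{1-(1-\delta)w} - 1 \;=\; \frac{(1+\delta)w-1}{1-(1-\delta)w}, $$
the quantity to control is exactly $\left|\,|f(w)|-1\,\right|$. For $0<\delta<1$ and $|w|=1$ the denominator never vanishes (its only root $w=1/(1-\delta)$ has modulus $>1$), so $f$ is continuous on the unit circle. Heuristically the claim should hold because as $\delta\to 0$ one has $f(w)\to -1$ for every fixed $w\neq 1$ (and $f(1)=1$ for all $\delta>0$), hence $|f(w)|\to 1$ pointwise; the real content is that this convergence is uniform in $w$, and the only delicate region is where $\delta$ and $1-w$ are both small.

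The heart of the matter is a direct computation of $|f(w)|^2$. Expanding $|(1+\delta)w-1|^2$ and $|1-(1-\delta)w|^2$ and using $|w|^2=1$, both become affine functions of $c:=\mathrm{Re}(w)\in[-1,1]$; their difference telescopes to $4\delta(1-c)$ while the denominator works out to $\delta^2+2(1-\delta)(1-c)$, so
$$ |f(w)|^2-1 \;=\; \frac{4\delta(1-c)}{\delta^2+2(1-\delta)(1-c)} \;\ge\; 0 . $$
In particular $|f(w)|\ge 1$ for every $w$ on the circle. Writing $t:=1-c\in[0,2]$, the right-hand side equals $4\delta t\bigl(\delta^2+2(1-\delta)t\bigr)^{-1}$, whose derivative in $t$ is $4\delta^3\bigl(\delta^2+2(1-\delta)t\bigr)^{-2}>0$; hence it is increasing in $t$ and maximized at $t=2$, yielding $|f(w)|^2-1\le 8\delta/(2-\delta)^2$ uniformly in $w$.

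Finally, since $|f(w)|\ge 1$,
$$ \left|\,|f(w)|-1\,\right| \;=\; |f(w)|-1 \;\le\; \bigl(|f(w)|-1\bigr)\bigl(|f(w)|+1\bigr) \;=\; |f(w)|^2-1 \;\le\; \frac{8\delta}{(2-\delta)^2}, $$
which is at most $8\delta$ for $\delta<1$. Thus, given $\epsilon>0$, the choice $\delta_0=\min(1,\epsilon/8)$ makes (\ref{claim1:eq1}) hold for all $0<\delta<\delta_0$ and all $w$ with $|w|=1$. I do not expect a genuine obstacle: the closed-form evaluation of $|f(w)|^2-1$ is the only step with any friction, and the monotonicity in $t$ is precisely what turns the pointwise limit into the required uniform bound; one should just keep track of the harmless special point $w=1$ (where $f$ is still well defined, with value $1$).
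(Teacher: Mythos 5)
Your proof is correct, and it takes a genuinely different route from the paper's. The paper argues geometrically: it observes that $w\mapsto\frac{\delta w}{1-(1-\delta)w}$ maps the unit circle to a circle meeting the real axis at $1$ and $\frac{\delta}{\delta-2}$, and that as $\delta\to 0$ this image circle approaches the circle of center $\frac12$ and radius $\frac12$ (whose image under $z\mapsto 2z-1$ is the unit circle); the uniformity over $w$ is left implicit in the phrase ``sufficiently close.'' You instead compute $|f(w)|^2-1$ in closed form, where $f(w)=\frac{(1+\delta)w-1}{1-(1-\delta)w}$, obtaining $\frac{4\delta(1-c)}{\delta^2+2(1-\delta)(1-c)}$ with $c=\mathrm{Re}(w)$; I checked the algebra (the numerator difference $(1+\delta)^2-(1-\delta)^2-4\delta c=4\delta(1-c)$ and the denominator rewriting $1+(1-\delta)^2-2(1-\delta)c=\delta^2+2(1-\delta)(1-c)$ are both right), the monotonicity in $t=1-c$ is correct, and the chain $\left|\,|f|-1\,\right|=|f|-1\le|f|^2-1\le\frac{8\delta}{(2-\delta)^2}\le 8\delta$ is valid since $|f|\ge 1$. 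Your approach buys an explicit, fully rigorous uniform bound (and the extra fact that the image circle lies \emph{outside} the unit circle), whereas the paper's argument is shorter and more conceptual but relies on an unquantified ``circles converge'' step. Either suffices for the application, since the Lemma only needs the limit to equal $\frac13$.
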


Now Claim \ref{claim:ineq} gives that \eqref{lem1:eq3} is equal to $1/3$. We used the fact that $\lim_{n \to \infty}e^{-\frac{1}{\sqrt{n}}}=1$ and the substitution $1-q=\delta$.
\qed

\proof [of Claim \ref{claim:ineq}]

The inequality \eqref{claim1:eq1} requires that there exists some number with absolute value $1$ that is sufficiently close to $\frac{2\delta w}{1-(1-\delta)w}-1$ or, equivalently, that there exists some number on a circle in the complex plane with its center at $1/2$ and a radius of $1/2$ that is sufficiently close to $\frac{\delta w}{1-(1-\delta)w}=\frac{1}{1+((1 / w)-1)/\delta}$.

The numbers $\left\{\frac{1}{1+((1 / w)-1)/\delta}|w\in C \textnormal{ and }|w|=1\right\}$ form a circle in the complex plane with its center on the real axis that has common points with the real axis at $1$ and $\frac{1}{1-{2 / \delta}}=\frac{\delta}{\delta-2}$.
The latter circle is sufficiently close to the circle with its center at $1/2$ and radius of $1/2$ if we choose $\delta_0>0$ sufficiently small so that the value of $\frac{\delta}{\delta-2}$ is sufficiently close to $0$. 
\qed


\subsection{n-party MAJORITY game}

By replacing the AND function with the MAJORITY function in the definition
of the $n$-party AND game, we obtain the $n$-party MAJORITY game.

More formally, $n$-party MAJORITY game ($nMAJ$) is a symmetric XOR game with
$X_1=\ldots=X_n=\{0,1\}$ and 
$V({\bf a}\mid{\bf x})$ demanding that $\bigoplus_{i=1}^na_i$
is $true$ if at least half of $x_i$ is $true$, and $false$ otherwise.
Similarly as in the previous section, we introduce parameters $c_i$
and $p_i$ and use the expression for game value given in \cite{AKNR}. 
This time $c_0=\ldots=c_{\lceil n/2\rceil-1}=1$,
$c_{\lceil n/2\rceil}=\ldots=c_n=-1$. 

The following lemma implies that, for large $n$, the quantum (and thus also
the classical) value of the game is $o(1)$ and, hence, the maximum winning 
probability that can be achieved is $\frac{1}{2}+o(1)$. For odd $n$, set
$n=2k-1$. For even $n$, set $n=2k$, $p_n=0$ and use the lemma as an upper bound.

\begin{Lemma}
\label{lem:zero}
$\lim_{k \to \infty}f(k)= 0$
where
$$
f(k)=
\min_{p_0, p_1, ..., p_{2k-1}:\sum p_i =1}\;
\max_{z:|z|=1}
\left| 
\sum_{i=0}^{k-1}p_i z^i-\sum_{i=k}^{2k-1}p_i z^i
\right|.
$$
\end{Lemma}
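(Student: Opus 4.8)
The plan is to exhibit an explicit choice of probabilities $p_0,\ldots,p_{2k-1}$ for which the maximum of the modulus over the unit circle tends to $0$, in analogy with the construction in Lemma~\ref{lem:ineq} but without the isolated $p_n$ term. The natural guess is a geometric profile: set $p_i = c\,q^{\,k-1-|i-(k-1/2)|}$ or, more simply, make the positive block $p_0,\ldots,p_{k-1}$ and the negative block $p_k,\ldots,p_{2k-1}$ each be geometric sequences that decay as one moves away from the ``boundary'' index near $k$, with ratio $q = e^{-1/\sqrt{k}}$ (so $q\to 1$ but $q^k\to 0$), normalized so that $\sum p_i = 1$. Because the two blocks carry opposite signs and are near-mirror images across the index $k-1/2$, the polynomial $P(z)=\sum_{i=0}^{k-1}p_iz^i-\sum_{i=k}^{2k-1}p_iz^i$ factors (up to an overall monomial) roughly as $(1-z^k)$ times a geometric-series factor $\frac{1-(qz)^k}{1-qz}$, whose sup-norm on $|z|=1$ can be controlled.

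First I would write $P(z)$ in closed form by summing the two geometric series, obtaining something of the shape $P(z) = A\,z^{a}\frac{1-(qz)^k}{1-qz} - B\,z^{b}\frac{1-(qz)^k}{1-qz}$ after aligning indices, i.e. $P(z) = \frac{1-(qz)^k}{1-qz}\bigl(A z^a - B z^b\bigr)$ with $b-a=k$. Since $|z|=1$ and $q^k\to 0$, the numerator factor $1-(qz)^k \to 1$ uniformly in $z$, exactly as in Lemma~\ref{lem:ineq}. The normalization forces $A, B = \Theta\!\bigl(\frac{1-q}{1-q^k}\bigr) = \Theta(1-q) = \Theta(1/\sqrt{k}) \to 0$. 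So on the unit circle $|P(z)| \le \bigl(|A|+|B|\bigr)\cdot \bigl|\tfrac{1-(qz)^k}{1-qz}\bigr| + o(1)$, and the task reduces to showing that $(1-q)\cdot \sup_{|z|=1}\bigl|\tfrac{1}{1-qz}\bigr| \to 0$, together with a harmless $o(1)$ from the vanishing $q^k$ term.

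Second, I would handle the one delicate point: $\frac{1}{1-qz}$ blows up as $z\to 1/q$ along the real axis, and on $|z|=1$ its worst case is at $z=1$, giving $\frac{1}{1-q}$, so $(1-q)\cdot\frac{1}{1-q}=1$, which is \emph{not} $o(1)$. This is the main obstacle, and it is exactly why the construction must be chosen more carefully than a naive geometric profile — the factor $Az^a - Bz^b$ must vanish (or be small) precisely where $\frac{1}{1-qz}$ is large. Near $z=1$ we have $b-a=k$, so $z^a - z^b = z^a(1-z^{k})$ already supplies a zero at $z=1$; the correct bookkeeping is to keep $(1-z^k)$ attached and estimate $\bigl|\tfrac{(1-(qz)^k)(1-z^k)}{1-qz}\bigr|$ on $|z|=1$. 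Writing $z = e^{i\theta}$, one shows this quantity is $O\bigl(\min(k,\,1/|1-q|)\bigr)=O(1/(1-q)) = O(\sqrt{k})$ uniformly (the numerator contributes a factor comparable to $|1-z|$ near $\theta=0$, cancelling the singularity, while away from $\theta=0$ the denominator is bounded below by a constant and the numerator by $2$). Multiplying by the prefactor $\Theta(1-q)=\Theta(1/\sqrt{k})$ then gives a bound of $O(1)$ — still not obviously $o(1)$, so a slightly sharper split is needed: on the arc $|\theta|\le 1-q$ bound $\bigl|\tfrac{(1-(qz)^k)(1-z^k)}{1-qz}\bigr|$ by $O(k|\theta|) = O(k(1-q)) = O(\sqrt k)$ against prefactor $1-q$, giving $O(\sqrt k(1-q)^2)=O(k^{-1/2})\to 0$; on the arc $|\theta|>1-q$ bound $|1-qz|$ below by $\Omega(|\theta|)$ and $|1-(qz)^k|,|1-z^k|\le 2$, giving $O(1/|\theta|)=O(1/(1-q))$, hence after the prefactor $O(1)$ — wait, that still fails, so I would instead sharpen to: on $|\theta|>1/\sqrt k$ use $|1-qz|\ge \Omega(|\theta|)$ but also exploit the averaging $|1-(qz)^k|\le 2$ and the fact that the relevant coefficients decay, so that the contribution there is already $o(1)$ by the geometric decay; on $|\theta|\le 1/\sqrt k$ use the cancellation from $(1-z^k)$.

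Finally I would assemble the estimates: choosing $q=e^{-1/\sqrt k}$ (so $1-q \sim k^{-1/2}$ and $q^k\to 0$) and the geometric weights normalized to sum to $1$, the two arcs together yield $\sup_{|z|=1}|P(z)| = o(1)$, hence $f(k)\le \sup_{|z|=1}|P(z)| \to 0$; since $f(k)\ge 0$ trivially, $\lim_{k\to\infty} f(k)=0$. I expect that the cleanest writeup mirrors the structure of Lemma~\ref{lem:ineq} and its Claim~\ref{claim:ineq}: first peel off the $1-(qz)^k$ factor via $q^k\to 0$, then prove a Claim asserting that $(1-q)\cdot\sup_{|z|=1}\bigl|\frac{1-z^k}{1-qz}\bigr|\to 0$ under $q=e^{-1/\sqrt k}$, which is the quantitative heart of the argument. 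The main obstacle, as noted, is precisely this near-$z=1$ balance between the vanishing coefficient magnitude $\Theta(1-q)$ and the growing factor $\Theta(1/(1-q))$ from $1/(1-qz)$; the resolution is that the MAJORITY sign pattern automatically supplies the compensating zero $(1-z^k)$ at $z=1$, and squeezing an extra power of $(1-q)$ out of that cancellation on the short arc $|\theta|\lesssim 1/\sqrt k$ is what makes the limit $0$ rather than a positive constant.
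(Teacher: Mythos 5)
There is a genuine gap here, and it is exactly the one you keep bumping into: a geometric weight profile cannot make the maximum tend to $0$, no matter how cleverly the unit circle is split into arcs. To see this concretely, take your mirror-symmetric choice $p_i=cq^{k-1-i}$ for $i<k$ and $p_i=cq^{i-k}$ for $i\geq k$, with $c=\frac{1-q}{2(1-q^k)}\sim\frac{1-q}{2}$. Writing $A(z)=\sum_{j=0}^{k-1}(qz)^j$, on $|z|=1$ the polynomial is $P(z)=cz^{k-1}\bigl(\overline{A(z)}-zA(z)\bigr)$, and since $q^k\to 0$ one may replace $A(z)$ by $\frac{1}{1-qz}$; a direct computation (using $z\bar z=1$) gives
$$\left|P(e^{i\theta})\right|\approx c(1+q)\frac{|1-e^{i\theta}|}{|1-qe^{i\theta}|^2}\approx 2c\,\frac{|\theta|}{(1-q)^2+\theta^2}.$$
The maximum over $\theta$ is attained at $\theta\approx 1-q$ and equals about $\frac{c}{1-q}\approx\frac12$, independently of how $q$ is chosen. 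The zero you are counting on at $z=1$ does cancel the singularity of $\frac{1}{1-qz}$ at $z=1$ itself, but the product peaks at distance $\theta\sim 1-q$ from $1$, where the numerator contributes $\sim 1-q$, the denominator $\sim(1-q)^2$, and the prefactor $c\sim\frac{1-q}{2}$ exactly cancels the surviving $\frac{1}{1-q}$. The structural obstruction is that for geometric weights the $\ell^1$ norm of the coefficients and the sup-norm of the signed polynomial on the circle are both $\Theta\bigl(\frac{1}{1-q}\bigr)$, so normalization buys nothing; your repeated attempts to sharpen the arc decomposition all return $O(1)$ because the answer really is $\Theta(1)$ (the same phenomenon kills the non-mirrored orientation $P(z)=A\frac{(1-(qz)^k)(1-z^k)}{1-qz}$, where the value near $z=e^{i\pi/k}$ is even closer to $1$).

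The paper escapes this with a different weight profile: $p_i\propto\frac{1}{2k-1-2i}$ for $0\leq i\leq k-1$, mirrored for $i\geq k$ --- harmonic, not geometric, decay away from the block boundary. After multiplying by $z^{\frac12-k}$ and substituting $w=z^{-\frac12}$, the antisymmetry turns the expression into $\frac{2}{s}\left|\sum_{i=1}^{k}\frac{\sin((2i-1)\theta)}{2i-1}\right|$, a partial sum of the Fourier series of the square wave, which is uniformly bounded in $k$ and $\theta$, while the normalizer $s=2\sum_{i=1}^{k}\frac{1}{2i-1}\sim\ln k\to\infty$. What makes this work is precisely that the coefficient $\ell^1$ norm diverges while the sup-norm of the trigonometric polynomial stays $O(1)$ --- the feature the geometric profile lacks. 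Your overall framework (choose explicit weights, reduce to a sup over the circle, exploit the antisymmetry of the two blocks) is sound, but the geometric choice of weights must be replaced by one with this divergence property for the limit to be $0$.
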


\proof

It is clear that $f(k)\leq g(k)$ where we obtain $g$ from $f$ by substituting
particular values for $p_i$: $p_i=\frac{r_i}{s}$ where $r_i=r_{2k-1-i}$ and $r_i=\frac{1}{2k-1-2i}$ for $0 \leq i \leq k-1$, and $s=2\sum_{i=1}^k \frac{1}{2i-1}$.
We prove the lemma by showing that $\lim_{k \to \infty}g(k)=0$.

Since $|z|=1$, we can multiply the polynomial within the modulus by $z^{\frac12-k}$ and use the substitution $w=z^{-\frac12}$ obtaining that $g(k)$ is equal to:
$$
\max_{z:|z|=1}
\left| 
\sum_{i=0}^{k-1}p_i z^i-\sum_{i=k}^{2k-1}p_i z^i
\right|
=
\max_{w:|w|=1}
\left| 
\sum_{i=0}^{k-1}p_i w^{2k-1-2i}-\sum_{i=k}^{2k-1}p_i w^{2k-1-2i}
\right|
$$
$$
=
\frac2s
\max_{w:|w|=1}
\left|
\operatorname{Im}
\left(
\sum_{i=0}^{k-1}r_i w^{2k-1-2i}
\right)
\right|
=\frac2s\max_{\theta}\left| \sum_{i=1}^k \frac{\sin(2i-1)\theta}{2i-1}\right|
$$
where $\operatorname{Im}(z)$ is the imaginary part of $z$ and $w=e^{i\theta}$.

Since the function $\sum_{i=1}^k \frac{\sin(2i-1)\theta}{2i-1}$ is a partial sum of the Fourier series of a square wave function, we have $\max_{\theta}\left| \sum_{i=1}^k \frac{\sin(2i-1)\theta}{2i-1}\right|=O(1)$. Also, $\frac2s=o(1)$ because $\lim_{k \to \infty}s=\infty$. The result follows.
\qed


\section{Games without common data}


Finally, we consider the question: what can the players do if they are not allowed to share common randomness (nor common quantum state)? For the case when the probability distribution on the inputs is fixed, this
scenario is equivalent to two players who can share common randomness because common randomness can be always fixed to the value that achieves the best result for the two players.

In the worst-case setting, we get different results. For many games, not allowing shared randomness results
in the players being unable to win the game with any probability $p>1/2$. But for at least one game, players can still win with a non-trivial probability, even if they are not allowed to share randomness.

We will use the $\hat\omega$ notation for the game value in this case 
to distinguish it from the case with shared randomness.

\begin{Theorem}
Suppose $G$ is a two-player XOR game where on every input exactly
one of the two possible values of $a\oplus b$ wins.
If $\hat\omega_c(G)>0$ then $\hat\omega_c(G)=1$, i.~e. then
$G$ can be won deterministically.
\end{Theorem}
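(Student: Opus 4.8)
The plan is to analyze what a deterministic classical strategy without shared randomness looks like and show that if it wins on every input with positive margin, then it actually wins everywhere. Without shared randomness, a deterministic strategy for player $A$ is just a function $a:X\to\{0,1\}$ and for $B$ a function $b:Y\to\{0,1\}$; there is no averaging, so the worst-case value $\hat\omega_c(G)$ is the \emph{minimum} over inputs $(x,y)$ of $V(a(x),b(y)\mid x,y)=(-1)^{a(x)}(-1)^{b(y)}V_{xy}$, where $V_{xy}=V(0,0\mid x,y)\in\{-1,1\}$ as in the proof of Theorem~\ref{qEEm}. Since each term of this minimum is $\pm 1$, the minimum is either $-1$ (some input is lost) or $+1$ (every input is won). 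Hence $\hat\omega_c(G)>0$ forces $\hat\omega_c(G)=1$; the content is to show this supremum is attained, i.e.\ that some fixed pair $(a,b)$ of functions wins everywhere.

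First I would fix an optimal (or near-optimal) deterministic pair $(a,b)$; since the value of any fixed deterministic strategy is in $\{-1,1\}$, "near-optimal with value $>0$" already means "value $=1$", so in fact any strategy witnessing $\hat\omega_c(G)>0$ wins on all inputs. The only subtlety is that $\hat\omega_c(G)$ is defined as a supremum over strategies; but because the strategy set is finite (finitely many functions $X\to\{0,1\}$ and $Y\to\{0,1\}$) and each strategy's value lies in the two-point set $\{-1,1\}$, the supremum is a maximum and equals $1$ exactly when some strategy wins everywhere. So $\hat\omega_c(G)>0$ is equivalent to the existence of such a strategy, which is the definition of $G$ being won deterministically.

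The step I expect to require the most care is making precise that without shared randomness the worst-case value of a \emph{fixed} deterministic strategy is literally $\min_{x,y}V(a(x),b(y)\mid x,y)$, i.e.\ that the referee's adversarial choice of $\pi$ can concentrate all mass on a single losing input. This is exactly the observation already made in the preliminaries ("if there is an input on which the strategy loses, then the referee can supply it with certainty, and the players always lose"), now applied in the setting where $R$ is trivial so no convex combination can rescue a losing strategy. Once that is in place, the dichotomy $\hat\omega_c(G)\in\{-1\}\cup\{1\}$ for each strategy, together with finiteness of the strategy space, immediately gives $\hat\omega_c(G)>0\Rightarrow\hat\omega_c(G)=1$. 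I would also remark that the hypothesis "on every input exactly one value of $a\oplus b$ wins" is what guarantees each entry $V_{xy}$ is a genuine $\pm1$ constraint (no input is trivially won), so that "wins everywhere" is the only way to get a positive value; without this hypothesis a single always-satisfiable input could not be exploited but the statement would need restating.
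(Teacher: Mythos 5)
There is a genuine gap: you have misidentified the strategy class over which $\hat\omega_c$ is defined. ``Without common data'' forbids \emph{shared} randomness, but each player may still use \emph{private}, uncorrelated randomness; this is exactly the model the paper uses in this section, parametrizing a strategy by the probabilities $p_{ix}$ that player $i$ outputs $1$ on input $x$, so that the winning probability on input $(x,y)$ has the product form $p_{1x}p_{2y}+(1-p_{1x})(1-p_{2y})$ or its complement. Your claim that every strategy's worst-case value lies in $\{-1,1\}$ is therefore false: a locally randomized strategy can have any value in $[-1,1]$, and indeed the next theorem of the paper exhibits a (non-XOR) game with $\hat\omega_c(G)=\sqrt5-2\in(0,1)$, which would be impossible in your deterministic-only model. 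Your dichotomy argument thus only establishes the trivial statement that a \emph{deterministic} strategy either wins everywhere or has worst-case value $-1$; it says nothing about a randomized strategy witnessing $\hat\omega_c(G)>0$.

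The actual content of the theorem is a derandomization step that your proposal never performs: one must show that if the product-form expression $pq+(1-p)(1-q)$ exceeds $\frac12$ on every input, then rounding each probability above $\frac12$ up to $1$ and each probability below $\frac12$ down to $0$ yields a deterministic strategy that still wins everywhere. The paper does this by observing that $pq+(1-p)(1-q)>\frac12$ forces $p$ and $q$ to lie on the same side of $\frac12$ (otherwise moving one of them to $\frac12$ would only increase the expression, yet make it exactly $\frac12$, a contradiction), and that pushing both to the corresponding endpoint only increases the winning probability; the hypothesis that exactly one value of $a\oplus b$ wins on each input is what puts the winning probability into this product form everywhere. To repair your proof you would need to add this rounding argument; the finiteness and $\pm1$-dichotomy observations you make are not where the difficulty lies.
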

\begin{proof}
The probability to give the correct answer on input $\left( x , y \right)$ is either $$p_{1x} p_{2y} + \left( 1-p_{1x} \right) \left( 1-p_{2y} \right)\quad\mbox{or}\quad
p_{1x}  \left( 1 -p_{2y} \right) + \left( 1-p_{1x} \right) p_{2y}$$ 
where $p_{ij}$ is the probability that $i$-th player will give output $1$ on input $j$.
We can denote both cases as $p q + \left( 1 - p \right) \left( 1 - q \right) $ where $0 \leq p, q \leq 1$.
If $\hat\omega_c(G)>0$ then this expression must be greater than $\frac{1}{2}$,
so either both $p$ and $q$
or both $1-p$ and $1-q$ are greater than $\frac{1}{2}$ (if, say, $p>\frac{1}{2}$
and $q<\frac{1}{2}$, then replacing $q$ by $\frac{1}{2}$ would increase
the expression, and it would become equal to $\frac{1}{2}$, contrary to
the observation that initially it must have exceeded $\frac{1}{2}$).
If we increase them to $1$, the value of the expression only increases.

So by increasing all the probabilities exceeding $\frac{1}{2}$ to $1$ and decreasing the others to $0$ we get a deterministic strategy that always wins.
\qed
\end{proof}

\begin{Theorem}
There exists a two-player bit game $G$ with 
$0<\hat\omega_c(G)=\left( \sqrt{5} - 2 \right)<1$.
\end{Theorem}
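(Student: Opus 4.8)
The plan is to exhibit an explicit two-player bit game (not an XOR game, since the previous theorem rules those out) together with an explicit randomness-free strategy, and to show that strategy is optimal. The key relaxation is the one used in the proof of the preceding theorem: a randomness-free classical strategy is determined by the numbers $p_{1x} = \Pr[A \text{ outputs } 1 \mid x]$ and $p_{2y} = \Pr[B \text{ outputs } 1 \mid y]$, so the winning probability on input $(x,y)$ is a fixed affine-in-each-variable function of $p_{1x}$ and $p_{2y}$. To get $\hat\omega_c(G) > 0$ with no shared randomness we need a predicate $V$ that is genuinely non-XOR on at least some inputs, i.e.\ for some $(x,y)$ more than one of the four answer pairs $(a,b)$ wins; this is exactly what lets the ``push the biased probabilities to $0$ or $1$'' argument fail and leaves a strictly interior optimum. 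I would take the smallest such example: $X = Y = \{0,1\}$, with $V$ chosen so that on three of the four inputs exactly one value of $a\oplus b$ wins while on the fourth input the winning set is a single pair like $\{(0,0)\}$ (or its complement) — tuned so that the min over the four inputs of the winning probability, maximized over $p_{10},p_{11},p_{20},p_{21}\in[0,1]$, equals $\sqrt5 - 2$.

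The lower bound is then just: write down the four winning probabilities as functions of the four parameters, set them (or the binding subset of them) equal, and check the resulting value is $\sqrt5-2$; the fact that $\sqrt5-2$ is a root of a quadratic (indeed $\sqrt5 - 2 = 1/(2+\sqrt5)$, so it satisfies $t^2 + 4t - 1 = 0$) strongly suggests the equalizing conditions reduce to a single quadratic in one variable, e.g.\ $p(1-p) = $ (something) or $p^2 = 1-2p$. So concretely I would guess the optimal strategy has $A$ and $B$ symmetric, with $p_{1x}=p_{2x}$, one of the probabilities equal to $0$ or $1$ and the other equal to some $t$ with $t^2+4t-1=0$ forced by equating the win probability on the ``OR-like'' input with the win probability on an ``XOR-like'' input. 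For the upper bound I would argue that on the set of four parameters the objective $\min_{(x,y)} w_{xy}(p_{1\cdot},p_{2\cdot})$ is a min of multilinear functions, so its maximum over the cube $[0,1]^4$ is attained either at a vertex (giving value $\le 1/2$, since at a vertex the strategy is deterministic and by the structure of $V$ cannot win all four inputs — one checks the $16$ deterministic strategies) or at an interior stationary point where several of the $w_{xy}$ are equal and equal to the common value; solving that small system gives $\sqrt5 - 2$ as the only candidate above $1/2$.

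The main obstacle is bookkeeping rather than conceptual: one has to pick the predicate $V$ on the $2\times 2$ input grid so that the optimization genuinely yields $\sqrt5 - 2$ and not some other algebraic number, and then verify optimality by checking that no other assignment of which constraints are tight, and no deterministic (vertex) strategy, does better. I expect the proof to proceed by (i) defining $G$ explicitly via a $4\times$ ($a,b$)-table, (ii) computing the four $w_{xy}$, (iii) exhibiting the equalizing strategy and evaluating it to $\sqrt5-2$, and (iv) a short case analysis over the active-constraint patterns plus the $16$ deterministic strategies to confirm this is the maximum of the min. The quantum version is not claimed here, so no Tsirelson-type argument is needed; only the classical randomness-free value must be pinned down, and for that the affine/multilinear structure together with the quadratic $t^2+4t-1=0$ does essentially all the work.
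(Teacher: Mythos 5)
Your high-level plan is the same as the paper's: restrict to a randomness-free strategy described by the per-input output probabilities $p_{1x},p_{2y}$, exploit the multilinear form of the winning probability on each input, and find an interior equalizing optimum for a non-XOR predicate on $X=Y=\{0,1\}$. But as written the proposal has a genuine gap: the theorem is an existence statement, and you never actually exhibit the game. You defer the choice of $V$ to ``tuning'' and then guess at the structure of the optimum, and the guess does not match what actually happens. The paper's witness is $V(a,b\mid x,y)=(x\vee y)\equiv(a\wedge b)$. For that game, on input $1$ each player should always answer $1$ (monotonicity), which collapses the problem to two parameters; symmetry forces $p_{00}=p_{10}=p$, and the binding constraints are $1-p^2=p$, i.e.\ $p^2+p-1=0$, giving win probability $p=\frac{1}{2}(\sqrt5-1)$ on every worst input. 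Your proposed quadratic $t^2+4t-1=0$ is satisfied by the \emph{game value} $\hat\omega_c=2p-1=\sqrt5-2$, not by the probability; you conflate the two when you say the min over inputs of the winning probability should equal $\sqrt5-2$ (it equals $\frac{1}{2}(\sqrt5-1)\approx 0.618$, while $\sqrt5-2\approx 0.236$). Likewise your guess that ``one of the probabilities is $0$ or $1$ and the other satisfies $t^2+4t-1=0$'' is not the shape of the actual optimum, and your description of $V$ (one anomalous input with a singleton winning set) is roughly inverted relative to the paper's game, where three inputs have the singleton winning set $\{(1,1)\}$ and one input has its complement.

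Your upper-bound sketch (max of a min of multilinear functions is attained at a vertex or at an equalizing stationary point, plus a check of the $16$ deterministic strategies) is sound in principle and would work once the game is fixed, though it is heavier than the paper's route, which uses the monotonicity observation $p_{11}=p_{21}=1$ to reduce to a one-variable problem. To close the gap you need to (i) commit to a concrete $V$, (ii) carry out the equalization and confirm the resulting \emph{value} (not probability) is $\sqrt5-2$, and (iii) keep the value/probability normalization straight throughout.
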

\begin{proof}
Consider a game with $X=Y=\{0,1\}$ and $V(a,b\mid x,y)=x \vee y \equiv a \wedge b$.

Let $p_{ij}$ be the probability that the $i$-th player on input $j$ gives output $1$.
If the player gets input $1$, it's always better to give output $1$ than $0$,
so $p_{11}=p_{21}=1$. Therefore the probability of players winning on different inputs are as follows:
$$
\begin{array}{ll}
00: 1 - p_{00} p_{10} & 10: p_{10}\\
01: p_{00} & 11: 1
\end{array}
$$
At the maximal probability of giving the correct answer on the worst input we have $p_{00} = p_{10}$ (if one of them would be less than the other, we could increase it).
Let's denote this value by $p$.
Then the best result is achieved when $1- p^2 = p$.
The only positive solution is $p = \frac{1}{2}\left( \sqrt{5} - 1 \right)$.
The result follows.
\qed
\end{proof}


\end{document}